\documentclass[lettersize,journal]{IEEEtran}
\usepackage{amsmath,amsfonts}
\usepackage{algorithmic}
\usepackage{algorithm}
\usepackage{array}
\usepackage{textcomp}
\usepackage{stfloats}
\usepackage{url}
\usepackage{verbatim}
\usepackage{graphicx}
\usepackage{cite}
\usepackage{amsthm,amsmath,amssymb}
\usepackage{mathrsfs}
\usepackage{enumitem}
\usepackage{subcaption} 
\usepackage{times}
\usepackage{svg}
\usepackage{multirow}
\usepackage{amsthm}            
\usepackage{longtable}
\usepackage{booktabs}
\usepackage{dsfont}

\usepackage{tabularx}
\usepackage{booktabs}

\theoremstyle{plain}

\newtheorem{definition}{Definition}

\newtheorem{proposition}{Proposition}
\newtheorem{problem}{Problem}

\begin{document}

\title{THz RHS Transceiver for Low-Latency Multi-User VR Transmission with MEC}

\author{Liangshun Wu, Wen Chen, Honghao Wang, Zhendong Li, Ying Wang
\thanks{Liangshun Wu, Wen Chen and Honghao Wang are with the Department of Electronic Engineering, Shanghai Jiao Tong University, Shanghai, China (e-mail: wuliangshun@sjtu.edu.cn; wenchen@sjtu.edu.cn, mc25018@um.edu.mo). Zhendong Li is with the School of Information and Communication Engineering, Xi’an Jiaotong University, Xi’an, China (e-mail: lizhendong@xjtu.edu.cn).  Ying Wang is with the State Key Laboratory of Networking and Switching Technology, Beijing University of Posts and Telecommunications, Beijing, China (e-mail: wangying@bupt.edu.cn). }
}


\maketitle

\begin{abstract}
This paper investigates a Terahertz (THz)-enabled mobile edge computing (MEC)-assisted virtual reality (VR) system using reconfigurable holographic surfaces (RHS) as transceiver for multi-user beamforming and holographic-pattern division multiple access (HDMA). We develop an end-to-end model  for the 3D field-of-view (FoV) generation pipeline and optimize content prefetching, rendering offloading  under memory and power constraints, and beamforming  accommodating user movement by adjusting holographic pattern weights for beam shaping and feeds power allocation for  excitation amplitude adjustment. For homogeneous FoVs, we derive closed-form policies for prefetching 2D or 3D FoVs or direct transmission of 3D FoVs. For heterogeneous FoVs, we exploit the timescale separation between prefetching/rendering and fast RHS beamforming, decomposing the optimization into a rendering-prefetching combinatorial optimization problem and a short-timescale beamforming convex optimization problem. Simulations show significant latency reductions under tight resource constraints.
\end{abstract}

\begin{IEEEkeywords}
THz, RHS, MEC, VR, Low Latency
\end{IEEEkeywords}

\section{Introduction}
\IEEEPARstart{I}{m}mersive $360^\circ$ virtual reality (VR) is envisioned as a key 6G application for lifelike experiences in entertainment, education, and beyond \cite{huang2023rate}. Yet, wireless VR faces three critical challenges: (i) huge data rates, as 4K $360^\circ$ video demands 70-300Mbps \cite{9606825}, far beyond sub-6 GHz\cite{zhinuk2024spectral}; (ii) ultra-low latency, with motion-to-photon delay needing to be under 20ms (i.e., 0.02s) \cite{picano2021end,zhao2024mobility,tun2024joint}, leaving only milliseconds for transmission after rendering/decoding; and (iii) limited head-mounted display (HMD) resources \cite{9120235}. These bottlenecks render high-quality wireless VR impractical on today’s networks. 

To address this, Terahertz (THz) communication is emerging enabler for offering multi-Gbps links (0.1–10 THz)\cite{akyildiz2022terahertz,jiang2024terahertz}, which is suitable for real-time panoramic streaming, 6G VR and holographic applications\cite{sun2019communications,9565222,9745789,9678373}.

\begin{figure*}[h]
\centering
\includegraphics[width=\textwidth]{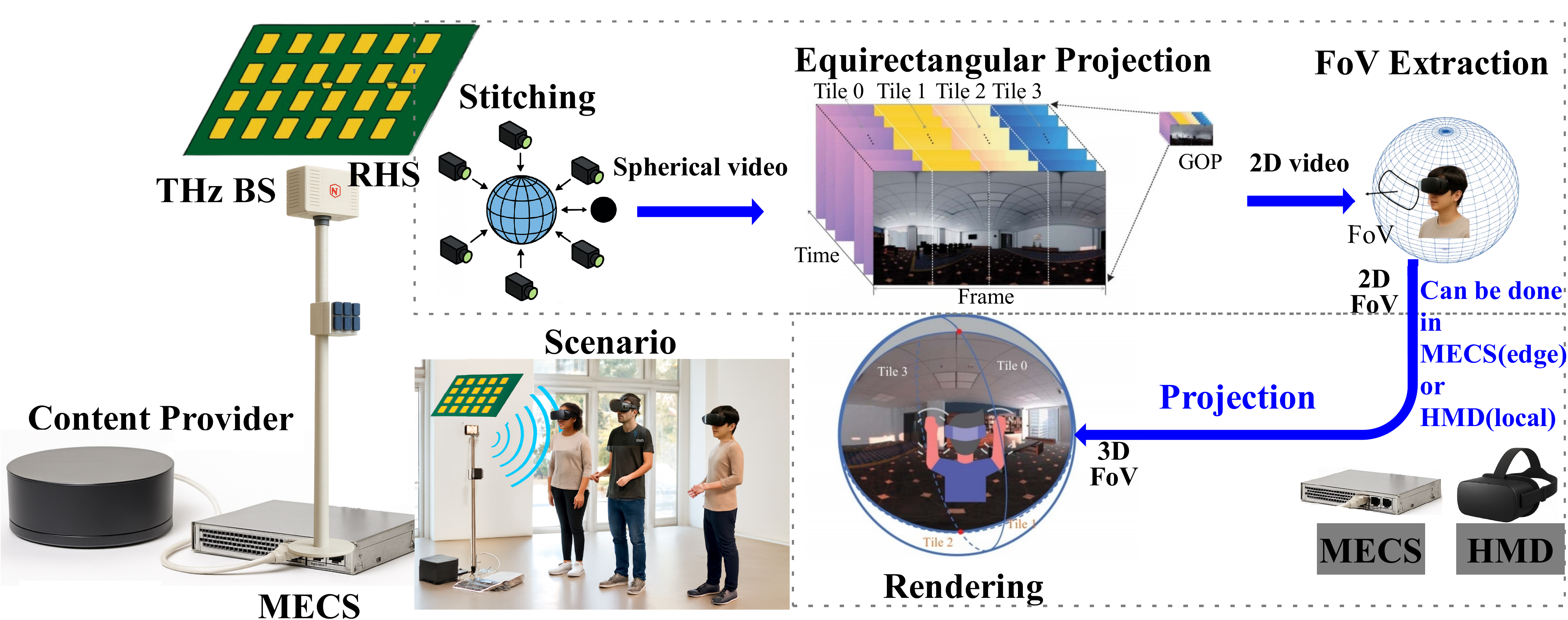}
\caption{A typical \(360^\circ\) VR video production pipline using THz RHS transceiver for wireless transmission with MEC.}
\label{fig:prod}
\end{figure*}

\subsection{VR Transmission with MEC}
We first analyze a typical \(360^\circ\) VR video production pipeline (Fig.~\ref{fig:prod}) \cite{sun2019communications}: i) \textit{Stitching}: Videos captured by a multi-camera array are stitched together to generate a spherical video; ii) \textit{Equirectangular projection}:  the spherical video is unfolded to 2D video encoded in group of pictures (GOPs) consisting of I-, P-, and B-frames. For high-resolution frames, each frame is split into multiple chunks, and each chunk is further divided into  \(4\times 6 = 24\) tiles. For low-resolution frames, the frame is directly partitioned into 24 tiles. iii) \textit{Extraction}: the current FoV is determined based on the user's gaze direction, and corresponding 2D FoV is extracted from the 2D video according to the FoV tracked by the HMD; iv) \textit{Projection}: The 2D FoV is projected into a 3D FoV; v) \textit{Rendering}: The resulting 3D FoV is rendered on the HMD.
The components i) \textit{stitching}, ii) \textit{equirectangular projection}, and iii) \textit{extraction} are performed at the cloud-based content provider, since they require the full \(360^\circ\) video as input. Offloading these tasks reduces the rendering burden on the mobile edge computing server (MECS) and the HMD, and alleviates wireless traffic. The content provider is connected to the MECS and the Terahertz (THz) base station (BS) via high-capacity optical fiber links. The BS delivers 2D FoVs to multiple HMD-equipped users, while the HMDs execute v) \textit{projection} and v) \textit{rendering} to display the immersive content.

The system leverages mobile edge computing (MEC) \cite{wu2025towards,wu2024attention} to offload rendering and decoding tasks from head-mounted displays (HMDs), thereby reducing device energy consumption and latency \cite{9350227,9667509}. To further improve efficiency, both 2D and 3D Fields of View (FoVs) are prefetched at the MEC server (MECS), with 3D FoVs generated offline by a low-complexity projection module. Since users’ FoV requests exhibit strong popularity patterns, probabilistic models can be exploited to optimize prefetching decisions \cite{zheng2021stc}. With the increasing computational capability of HMDs, the projection module can be migrated to the HMD, enabling local 3D FoV generation from prefetched 2D FoVs and reducing wireless traffic by up to half, as stereoscopic rendering requires duplicating 3D FoVs for both eyes \cite{deng2022fov}. However, this migration introduces additional rendering delay, necessitating careful prefetching and computation strategies. The HMD memory can store either 2D or 3D FoVs, where 3D prefetching lowers latency and energy consumption at the cost of higher memory usage, making the tradeoff between communication, computation, and storage a key design challenge. Most existing MEC-based VR studies oversimplify rendering and lack rigorous pipeline-level modeling \cite{9667509,10816367,9350227}, while HMD-side prefetching approaches often neglect essential VR characteristics or focus solely on 2D FoVs \cite{ma2024qoe,wang2022meta}. In contrast, this work models the full pipeline and jointly optimizes 2D and 3D FoV prefetching and rendering.

In the system depicted in Fig.~\ref{fig:prod}, we adopt Reconfigurable Holographic Surfaces (RHS) as the BS transceiver for precise beamforming, benefiting for efficient 2D and 3D FoV transmission from the MECS to multiple HMDs, ensuring optimal coverage when users move. THz technology facilitates the high-speed transmission of 2D FoVs from the MEC to HMDs, enabling low-latency, high-bandwidth communication essential for immersive VR experiences.

\subsection{RHS}
RHS use sub-wavelength units with carefully designed resonant properties to enable stable broadband holographic modulation across hundreds of GHz without needing separate structures for different frequencies. RHS controls the wavefront via spatial phase distribution, ensuring consistent wavefront control across a wide bandwidth\cite{9696209}. Using electrically controlled units (e.g., PIN diodes, varactors), holographic patterns can switch in microseconds, enabling real-time beam steering and multi-user shaping\cite{9994740}, ideal for dynamic immersive VR. With ultra-thin PCB manufacturing and no need for independent phase shifter (PA), hardware costs and power consumption are low, facilitating large-scale arrays with hundreds/thousands of elements. Additionally, holographic-pattern division multiple access (HDMA)\cite{9681843,yutong2022holographic} utilizes the superposition of holographic patterns associated with different beams to serve multiple HMDs. Current RHS studies mainly address hardware and beamforming, while their integration with MEC and THz VR transmission remains unexplored.

\subsection{Contributions}

The main contributions of this paper are as follows:
\begin{enumerate}
    \item {THz and RHS-assisted VR transmission architecture:}  
    We propose a VR transmission architecture based on RHS (holographic metasurface) and THz (terahertz) technologies, supporting multi-user systems through precise beam control and electronically steerable multi-beams. This scheme leverages THz's ultra-high bandwidth and RHS's fast beamforming to accommodate multi-user mobility in immersive VR environments, enabling efficient VR data transmission. We also incorporate mutual coupling effects, which account for the interactions between adjacent radiation elements in RHS array.
    
    \item {MEC-based VR framework and formulation:}  
    We design an MEC-based VR framework that offloads rendering tasks to the MEC-HMD edge, reducing wireless traffic and latency. We formulate and solve a joint prefetching-rendering-beamforming optimization problem to minimize transmission delay under constraints, optimizing content prefetching, rendering offloading, and re-beamforming with holographic pattern weight adjustments and feed power allocation to adapt to user mobility.
    
    \item {Binary-continuous mixed optimization for joint prefetching-rendering optimization with mobility-aware RHS beamforming:}  
    This involves a challenging joint binary-continuous optimization problem. We first derive closed-form solutions for prefetching 2D FoVs with local rendering, prefetching 3D FoVs, or directly transmitting remote 3D FoVs. We extend the framework to heterogeneous FoV settings, considering HMD mobility and rapid beam re-steering. By decomposing the problem into long- and short-timescale subproblems, we combine Multiple-Choice Multi-Dimensional Knapsack Problem (MMKP)-based heuristics, binary-relaxation Difference-of-Convex Algorithm (DCA) and Convex-Concave Procedure (CCCP), full DCA/CCCP solvers, and Projected Gradient (PG)-based RHS beamforming optimizers to efficiently solve the optimization problem.
\end{enumerate}

Next, Section II describes the system model; Sections III and IV study the homogeneous and heterogeneous cases; Section V presents simulation results; and Section VI concludes.

\begin{figure*}[h]
\centering
\includegraphics[width=\textwidth]{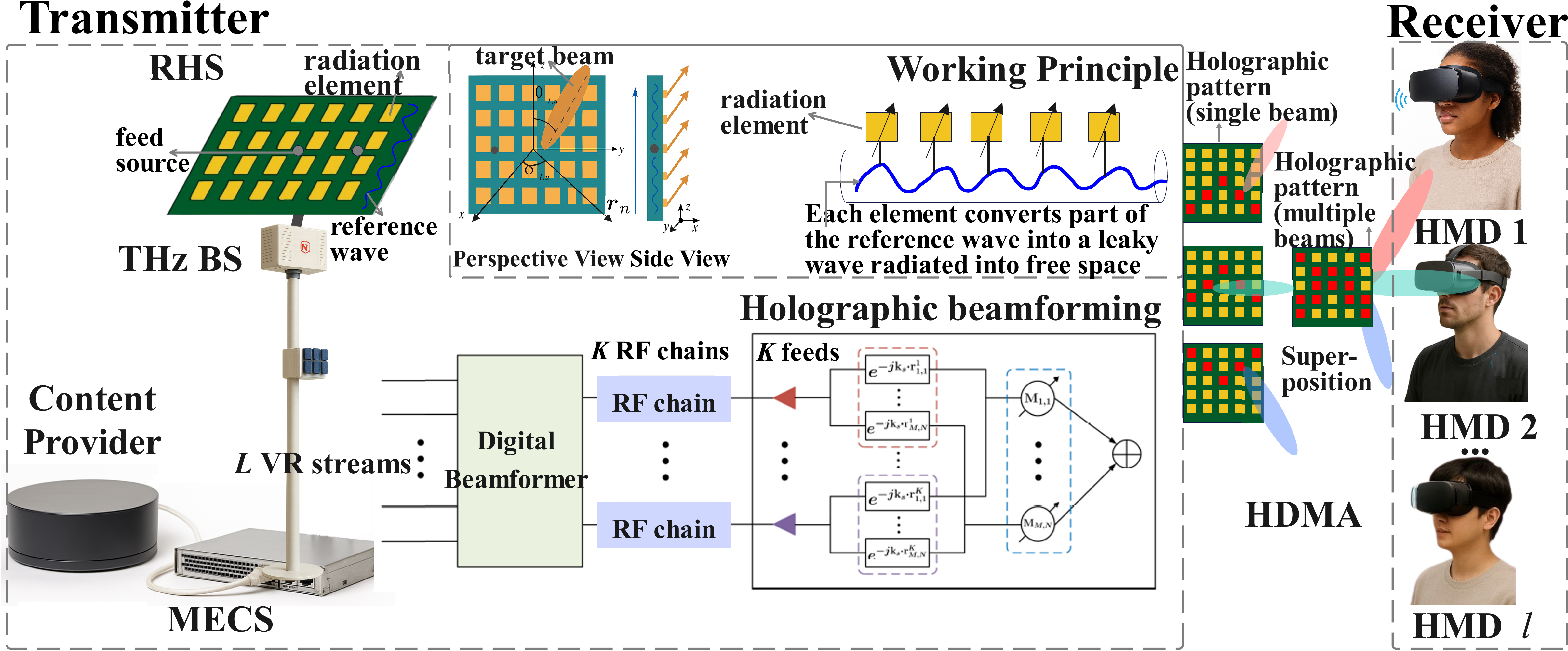}
\caption{RHS-assisted THz multi-user VR transmission.}
\label{scenario}
\end{figure*}
\section{System Model}
\subsection{THz RHS Channel Model }
Rendering offloading involves communication between MECS, equipped with a BS and an RHS to support highly directional and adaptive THz beamforming, and $L$ HMDs.

\subsubsection{Frequency-Selective THz Channel}
Due to the extremely high carrier frequency, the THz channel exhibits strong frequency 
selectivity, molecular absorption, and severe spreading loss. Following common practice,  the carrier frequency is $f_c=0.3~\mathrm{THz} = 300~\mathrm{GHz} $, hence $\lambda_c=c/f_c=1~\mathrm{mm}$, where $c = 3\times 10^8~\mathrm{m/s}$ is the speed of light.
The entire bandwidth $B_c=40~\mathrm{GHz}$ is divided into $U=4$ subbands, each with $B_g=10~\mathrm{GHz}$, which is smaller than the channel coherence bandwidth. Thus, each 
subband can be modeled as a narrowband flat-fading channel. Let $f_u$ denote the center frequency of the $u$-th subband, $f_u= f_c-\frac{B_c}{2}+\left(u-\frac{1}{2}\right)\frac{B_c}{U}$, i.e., $f_u\in\{285,295,305,315\}~\mathrm{GHz}$.  Under the subband narrowband assumption, the frequency-selective THz baseband channel coefficient between the BS and the $l$-th HMD ($l \in \mathcal{L}, |\mathcal{L}|=L$) on subband $u$ is expressed as
$
h_{l,u}
= \frac{c}{4\pi f_u d_l}\exp\left(-\frac{\kappa(f_u) d_l}{2}\right),
\label{eq:h}
$
where $\kappa(f_u)$ denotes the frequency-dependent molecular absorption coefficient~\cite{du2025qoe}. $d_l$ denotes the Euclidean distance between the RHS and HMD $l$:
$
d_l = \|\mathbf{p}_l - \mathbf{p}_{\mathrm{R}}\|
= \sqrt{\Delta x_l^2+\Delta y_l^2+\Delta z_l^2},
$ where $\mathbf{p}_{\mathrm{R}} = [x_{\mathrm{R}},y_{\mathrm{R}},z_{\mathrm{R}}]^{\mathrm{T}}$ is the RHS position and 
$
\mathbf{p}_l = [x_l,y_l,z_l]^{\mathrm{T}}.
$ is HMD  $l$'s position. 
$\Delta x_l=x_l-x_{\mathrm{R}},$$\Delta y_l=y_l-y_{\mathrm{R}},$$\Delta z_l=z_l-z_{\mathrm{R}}.$

\subsubsection{Inter-Band Interference (IBI)}
Although each subband is narrowband, spectral leakage across adjacent subbands causes 
IBI. According to~\cite{du2025qoe}, the IBI on subband $u$ 
can be approximated as a zero-mean Gaussian random variable with variance
\begin{equation}
I_u \sim \mathcal{N}\left( 0,\; 
\int_{f_u} \sum_{v \neq u}^{U} P_v 
\left| G_v(f_u) 
\sum_{q \in \mathcal{Q}} \alpha_{v,q}
\right|^2 df_u
\right),
\label{eq:IBI}
\end{equation}
where  $\alpha_{v,q}$ is leakage coefficient, $P_v$ is the power allocated to subband $v$ and $G_v(\cdot)$ is the waveform 
spectrum.

\subsubsection{HDMA-Based RHS Holographic Beamforming}
RHS is a special leaky-wave antenna composed of feed sources and a large number of metasurface radiation elements. As shown in Fig.~\ref{scenario}, each feed is placed at the edge or backplane of the RHS and converts the input signal into an electromagnetic reference wave. The RHS adopts a serial-fed architecture: the reference wave propagates across the surface and sequentially excites the radiation elements. Each element converts part of the reference wave into a leaky wave radiated into free space, where the radiation amplitude at each element can be independently tuned to realize holographic beamforming. Compared with conventional parallel-fed phased arrays, the serial-fed RHS architecture greatly simplifies wiring for large-scale MIMO deployment.

To illustrate the holographic principle, consider an RHS with $K$ feed sources and $N=N_x\times N_y$ radiation elements, $N_x = N_y=\sqrt{N}$, arranged as a $\sqrt{N}\times\sqrt{N}$ uniform planar array (UPA) with inter-element spacing $d=\lambda_c/2$. Let $\Psi_{l,u}$ denote the desired beam radiating toward HMD $l$ direction  $(\theta_{l,u},\varphi_{l,u})$ for subband $u$  and $\Psi_{\mathrm{ref}}$ the reference wave generated by feed $k$. At the $n$-th radiation element, they can be written as
$
\Psi_{l,u}=e^{-j\boldsymbol{k}_f(\theta_{l,u},\varphi_{l,u})\cdot \boldsymbol{r}_n},
$
$
\Psi^{\mathrm{ref}}(\boldsymbol{r}_n^k)=e^{-j\boldsymbol{k}_s\cdot \boldsymbol{r}_n^k},
$
where $\boldsymbol{k}_f(\theta_{l,u},\varphi_{l,u})$ is the free-space wave vector, $\boldsymbol{r}_n$ is the position of the $n$-th element, $\boldsymbol{k}_s$ is the propagation vector of the reference wave, and $\boldsymbol{r}_n^k$ is the displacement vector from feed $k$ to element $n$. According to holographic interference, the recorded hologram is the interference pattern
$
\Psi_{l,u}^{\text{intf}}=\Psi_{l,u}\Psi^{\text{ref}}.
$
When illuminated again by the reference wave, the resulting leaky-wave field satisfies
$
\Psi_{l,u}^{\text{intf}}\Psi^{\text{ref}}\propto \Psi_{l,u}\left|\Psi^{\text{ref}}\right|^2.
$
The departure direction of the LoS path from the RHS to HMD $l$ can be parameterized in spherical coordinates by the elevation angle $\theta_{l,u}$ and azimuth angle $\varphi_{l,u}$ as $\varphi_{l,u}= \arctan2\big(\Delta y_l,\Delta x_l\big)$, $\theta_{l,u}= \arctan2\Big(\Delta z_l,\sqrt{\Delta x_l^2+\Delta y_l^2}\Big)$, which are independent of the subband $u$ and fully determined by the geometry.

Based on the holographic interference principle, the composite holographic pattern of subband $u$ used in HDMA for all $L$ HMDs is obtained by superimposing the corresponding interference fields:
$
m_{u,n}
=\sum_{l=1}^{L}\sum_{k=1}^{K}
a_{l,u,k}
\frac{\operatorname{Re}\left[\Psi_{l,u}^{\text{intf}}\right]+1}{2},
$
where $a_{l,u,k}\ge 0$ is the amplitude ratio for the beam 
pointing to user $l$ from feed $k$, the weights satisfying $\sum_{l=1}^{L}\sum_{k=1}^{K} a_{l,u,k}=1$. The $n$-th radiation element from $k$ feed of holographic beamforming matrix $\boldsymbol{M}_{u}\in\mathbb{C}^{N\times K}$ is defined as
$
{\boldsymbol{M}_{u}}_n^k
=\sqrt{\beta} m_{u,n}
e^{-\alpha \|\boldsymbol{r}_n^k\|}
e^{-j \boldsymbol{k}_s \boldsymbol{r}_n^{k}},
$
where $\beta$ denotes element radiation efficiency, $\alpha$ is the structural attenuation coefficient.

\begin{figure}[h]
\centering
\includegraphics[width=0.48\textwidth]{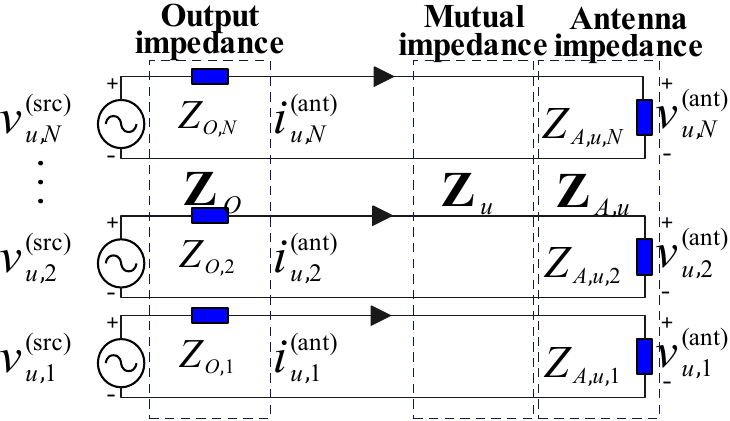}
\caption{An equivalent multi-port circuit model of the mutual coupling effect at the transmitting end (Tx)  in RHS.}
\label{fig:array}
\end{figure}

Due to the closer spacing between radiation elements, mutual coupling occurs as discussed in \cite{11274834,11274829}. The equivalent multi-port circuit model of the RHS at the transmitter is illustrated in Fig. \ref{fig:array} \cite{11274829}. Each RHS element is excited by a voltage source with output impedance $Z_O$, and the resulting antenna currents flow through a frequency-dependent mutual impedance network $\boldsymbol{Z}_u$, whose diagonal and off-diagonal entries represent the self-impedance and inter-element coupling, respectively. The antenna terminal voltages $\mathbf{v}_u^{(\mathrm{ant})}$ are therefore jointly determined by the source excitations, the output impedance, and the mutual impedance network. The voltages on the antenna elements, denoted as $\mathbf{v}_u^{(\mathrm{ant})}$, is given by $\mathbf{v}_u^{(\mathrm {ant })}=\mathbf{Z}_u \mathbf{i}_u^{(\mathrm {ant })}$,
where $\mathbf{Z}_u$ denotes the frequency-dependent mutual impedance matrix at frequency $f_u$. 
The mutual impedance matrix $\mathbf{Z}_u\in\mathbb{C}^{N\times N}$ is constructed element-wise as
$[\mathbf{Z}_u]_{p,q}=Z_{A,u}$ for $p=q$, while for $p\neq q$ it is given by the mutual impedance
between two electrically short dipoles separated by distance $d_{p,q}$, expressed as
$
[\mathbf{Z}_u]_{p,q}
=
-30\{
\int_{0}^{l_d}
\sin \left(\frac{2\pi}{\lambda_u}z\right)+\int_{l_d}^{2 l_d} \sin \left[\frac{2 \pi}{\lambda_u}\left(2 l_d-z\right)\right]\}
$
$
(
\frac{-\mathrm{j}e^{-\mathrm{j}\frac{2\pi}{\lambda_u}r_1}}{r_1}
+
\frac{-\mathrm{j}e^{-\mathrm{j}\frac{2\pi}{\lambda_u}r_2}}{r_2}
+
\frac{2\mathrm{j}\cos\!\left(\frac{2\pi}{\lambda_u}l_d\right)
e^{-\mathrm{j}\frac{2\pi}{\lambda_u}r_0}}{r_0}
)\mathrm{d}z,
$
where $r_0=\sqrt{d_{p,q}^2+z^2}$,
$r_1=\sqrt{d_{p,q}^2+(l_d-z)^2}$,
$r_2=\sqrt{d_{p,q}^2+(l_d+z)^2}$,
and the inter-element distance is
$d_{p,q}=d\sqrt{(i_p-i_q)^2+(j_p-j_q)^2}$ under the $N=N_x\times N_y$ planar array indexing; $l_d=\lambda_c/4$ is the length of the electrical short dipole antenna. Here, $z$ is the variable that represents the position along the length of the antenna, used to calculate the distances at different positions along the antenna.

According to Kirchhoff's laws, the voltages generated by the sources in the presence of mutual coupling is $\mathbf{v}_u^{(\mathrm{src})}=\mathbf{Z}_O \mathbf{Z}_u^{-1} \mathbf{v}_u^{(\mathrm{ant})}+\mathbf{v}_u^{(\mathrm{ant})}$, where $\mathbf{Z}_O = Z_O \mathbf{I}_N$ denotes the output impedance matrix of the voltage sources, typically $Z_O = 50 \Omega$ for RF power amplifiers. In the absence of mutual coupling, the mutual impedance matrix degenerates to the antenna impedance matrix, i.e., $\mathbf{Z}_u \xrightarrow{\text { No Mutual Coupling }} \mathbf{Z}_{A, u}=Z_{A, u} \mathbf{I}_N$. Since RHS adopts electrically short dipoles ($l_d\ll\lambda_u$), $Z_{A,u}$ follows the standard approximation in \cite{11274834}: $Z_{A, u} \approx 80\left(\frac{l_d}{\lambda_u}\right)^2-\mathrm{j} \frac{120}{\pi} \ln \left(\frac{\lambda_u}{2 \pi l_d}\right) $. the voltages generated by the sources
in the absence of mutual coupling can be obtained as:  $\mathbf{v}_u^{(\mathrm{src})}=\mathbf{Z}_O \mathbf{Z}_{A, u}^{-1} \mathbf{v}_u^{(\mathrm{ant}, \mathrm{no} \mathrm{mc})}+\mathbf{v}_u^{(\mathrm{ant}, \mathrm{nomc})}$, where $\mathbf{v}_u^{(\mathrm {ant, nomc })}$ encapsulates the voltages on the antenna elements in the absence of mutual coupling. So, We have $\mathbf{Z}_O \mathbf{Z}_{A, u}^{-1} \mathbf{v}_u^{(\mathrm {ant,no mc })}+\mathbf{v}_u^{(\mathrm {ant,no mc })}=\mathbf{Z}_O \mathbf{Z}_u^{-1} \mathbf{v}_u^{(\mathrm {ant })}+\mathbf{v}_u^{(\mathrm {ant })}$, that means $\left(\mathbf{I}_N+\mathbf{Z}_O \mathbf{Z}_u^{-1}\right)^{-1}\left(\mathbf{I}_N+\mathbf{Z}_O \mathbf{Z}_{A, u}^{-1}\right) \mathbf{v}_u^{(\mathrm {ant,nomc })}=\mathbf{v}_u^{(\mathrm {ant })}$. We define the coupling matrix $\boldsymbol{\Xi}_u \in \mathbb{C}^{N\times N}$ as
$
\boldsymbol{\Xi}_u=\left(\mathbf{I}_N+\mathbf{Z}_O \mathbf{Z}_u^{-1}\right)^{-1}\left(\mathbf{I}_N+\mathbf{Z}_O \mathbf{Z}_{A, u}^{-1}\right)
$.
The holographic beamforming matrix becomes
\begin{equation}
\tilde{\boldsymbol{M}}_u
\triangleq
\boldsymbol{\Xi}_u  \boldsymbol{M}_u .
\end{equation}

Let $\boldsymbol{P}_u\in\mathbb{C}^{K\times 1}$ be the power allocation vector for subband $u$ with $\mathrm{Tr}(\boldsymbol{P}_u\boldsymbol{P}_u^H)=P_{u}^\mathrm{Tx}$, where $P_{u}^\mathrm{Tx}=\frac{P^\mathrm{Tx}}{U}$ is the total power of the transmitter for subband $u$ (equally shared by all $U$ subbands), and $x_{l,u}$ the broadcast signal. The received signal of HMD $l$ of subband $u$  is
$
y_{l,u}=\boldsymbol{H}_{l,u}\tilde{\boldsymbol{M}}_u\boldsymbol{P}_u x_{l,u}+n_l,
$
where  $n_l$ is Gaussian noise with variance $B_gn_0$ and  $\boldsymbol{H}_{l,u}\in\mathbb{C}^{1\times N}$ is the RHS-HMD channel vector.
Assuming that all elements are approximately equidistant from the HMD $l$ (i.e., the spacing among the RHS elements is negligible compared with the link distance), the channel coefficients of all elements in $\boldsymbol{H}_{l,u}$ become identical. Thus, the channel gain of the  $n$-th radiation element is 
$
h_{l,u,n} = h_{l,u}.
$

The $u$-th subband suffers from IBI $I_u$ caused by spectral leakage from other subbands, as modeled in eq.~\eqref{eq:IBI}. The instantaneous SINR of HMD 
$l$ on subband $u$ is given by
\begin{equation}
\gamma_{l,u}
=
\frac{
G^{\mathrm{Tx}} G^{\mathrm{Rx}}
\big|  \boldsymbol{H}_{l,u} \tilde{\boldsymbol{M}}_u\boldsymbol{P}_u  \big|^2 
}{
G^{\mathrm{Tx}} G^{\mathrm{Rx}}I_u + B_g n_0
},
\label{eq:SINR_RHS_HDMA}
\end{equation}
where $G^{\mathrm{Tx}}$ and $G^{\mathrm{Rx}}$ denote the transmit and receive antenna 
gains, respectively.

Given a target bit error rate (BER) $\epsilon_{l,u}$ for HMD $l$ on subband $u$, the 
achievable spectral efficiency with $M$-QAM is approximated by $k_{l,u}$ (bit/s/Hz): $k_{l,u} = \log_2\left(
1 - \frac{1.5\gamma_{l,u}}{\ln\big(5\epsilon_{l,u}\big)}
\right)$.

HMD $l$'s individual transmission rate for all $U$ subbands is
\begin{equation}
R_l
= B_g \sum_{u=1}^{U} k_{l,u} = B_g \sum_{u=1}^{U} \log_2\left(
1 - \frac{1.5\gamma_{l,u}}{\ln\big(5\epsilon_{l,u}\big)}
\right).
\label{eq:user_rate_HDMA}
\end{equation}

\subsection{VR Transmission Model with MEC}
Let $\mathcal{V} \triangleq \{1, \cdots, V\}$ denote the FoV space, where these FoVs are extracted from 2D videos, as shown in Fig. \ref{scenario}. For HMD $l$, the projection from each FoV $i \in \mathcal{V}$ from 2D FoV to 3D FoV is represented by the triplet $(Q_l^{2\mathrm{D}}, Q_l^{3\mathrm{D}}, o)$, where $Q_l^{2\mathrm{D}}$ and $Q_l^{3\mathrm{D}}$ are the data sizes (in bits) of the 2D FoV and 3D FoV that HMD $l$ needs to process, respectively, and $o$ is the number of rendering cycles required per input bit (unit: cycles/bit). Let $\alpha_l \triangleq \frac{Q_l^{3\mathrm{D}}}{Q_l^{2\mathrm{D}}}$ denote the ratio between the 3D FoV size and the 2D FoV size processed by HMD $l$. Typically, $\alpha_l \geq 2$, which is required for creating binocular stereoscopic visual effects.

Similar to \cite{liu2023rendered}, the request stream of HMD $l$ follows the Independent Reference Model (IRM). It is assumed that the probability that HMD $l$ requests FoV $i \in \mathcal{V}$ at each request is a constant $\pi_{l,i}$, and the requests are independent of each other. The probability $\pi_{l,i}$ represents the frequency with which HMD $l$ selects FoV $i$ when browsing the scene, and satisfies $\sum_{i=1}^V \pi_{l,i} = 1$.  Moreover, to avoid motion sickness and nausea, each request of HMD $l$ must be served within the maximum allowable delay $\tau$.

First, consider the prefetching layout of HMD $l$. Assume that HMD $l$ has a memory of size $\delta Q_{l,i}^{2\mathrm{D}}$ (in bits), where $\delta$ is an integer, capable of storing the 2D and 3D FoVs of certain views. Let $p_{l,i}^{2\mathrm{D}} \in \{0,1\}$ denote the 2D FoV prefetching decision of HMD $l$ for FoV $i$, where $p_{l,i}^{2\mathrm{D}} = 1$ indicates that the 2D FoV $i$ is prefetched in memory of HMD $l$, and $p_{l,i}^{2\mathrm{D}} = 0$ otherwise. Let $p_{l,i}^{3\mathrm{D}} \in \{0,1\}$ denote the 3D FoV prefetching decision of HMD $l$ for FoV $i$, where $p_{l,i}^{3\mathrm{D}} = 1$ indicates that the 3D FoV $i$ is prefetched in memory of HMD $l$, and $p_{l,i}^{3\mathrm{D}} = 0$ otherwise. Under the memory constraint, the following condition must be satisfied:
$
\sum_{i=1}^V Q_{l,i}^{2\mathrm{D}} p_{l,i}^{2\mathrm{D}} 
    + \alpha Q_{l,i}^{2\mathrm{D}} p_{l,i}^{3\mathrm{D}}
    \leq \delta Q_{l,i}^{2\mathrm{D}}.
$
We assume that all 2D and 3D FoVs of all HMDs are prefetched to the MECS. This is reasonable because the memory of the MECS is much larger than that of any HMD.

Next, consider the projection rendering decision on HMD $l$. Suppose that HMD $l$ operates with a given CPU cycle frequency $f_l$ (unit: cycles/s), and the power consumed per rendering cycle is $\zeta f_l^2$, where $\zeta$ is the effective switched capacitance associated with the chip architecture and reflects the energy efficiency of the CPU of HMDs~\cite{sun2019communications}. We assume that the battery capacity of the HMDs is limited. To ensure that the user's VR viewing time does not fall below a required average duration, considering that other applications also consume resources and that battery lifetime should be extended, a long-term time-averaged power consumption constraint is introduced, denoted by $\bar{P}_l$ (unit: watts, W). Let $r_{l,i} \in \{0,1\}$ denote the rendering decision of HMD $l$ for FoV $i$, where $r_{l,i} = 1$ indicates that the projection from the 2D FoV to the 3D FoV $i$ is rendered on HMD $l$, and $r_{l,i} = 0$ otherwise. Under the power constraint, the following condition must hold \cite{liu2023rendered}:
$
\frac{\zeta f_l^2 Q_{l,i}^{2\mathrm{D}}o}{V \tau} 
\sum_{i=1}^V r_{l,i} \leq \bar{P}_l,
\label{eq2}
$
i, e.,
$
\sum_{i=1}^V r_{l,i} \leq \frac{V \bar{P}_l \tau}{\zeta f_l^2 Q_{l,i}^{2\mathrm{D}} o}, 
$
which actually represents the maximum number of FoVs that HMD $l$ can render under the request process, referred to as the rendering budget of HMD $l$. For simplicity, we assume that 
$
\frac{V \bar{P}_l \tau}{\zeta f_l^2 Q_{l,i}^{2\mathrm{D}} o}
$
is an integer.

Finally, let $\left(\boldsymbol{p}_l^{3\mathrm{D}}, \boldsymbol{p}_l^{2\mathrm{D}}, \mathbf{r}_l\right)$ denote the joint prefetching and rendering decisions of HMD $l$ for all FoVs, where $\boldsymbol{p}_l^{3\mathrm{D}} \triangleq (p_{l,i}^{3\mathrm{D}})_{i \in \mathcal{V}}$ represents the 3D FoV prefetching decision vector for all FoVs, and $\boldsymbol{p}_l^{2\mathrm{D}} \triangleq (p_{l,i}^{2\mathrm{D}})_{i \in \mathcal{V}}$ represents the 2D FoV prefetching decision vector that satisfies the memory constraint. The vector $\mathbf{r}_l \triangleq (r_{l,i})_{i \in \mathcal{V}}$ denotes the rendering decision vector.
Based on the joint prefetching and rendering decision 
$\left(\boldsymbol{p}_l^{3\mathrm{D}}, \boldsymbol{p}_l^{2\mathrm{D}}, \mathbf{r}_l\right)$,
the request of HMD $l$ for FoV $i \in \mathcal{V}$ can be served through one of the following four paths (illustrated by Fig. \ref{fig:paths}):
\begin{enumerate}
\renewcommand{\labelenumi}{\roman{enumi}:}  
\item {On-device 3D prefetching:}  
If $p_{l,i}^{3\mathrm{D}} = 1$, then the 3D FoV $i$ can be directly obtained from the memory of HMD $l$, requiring neither transmission nor rendering. Thus, the delay is negligible:
$
T_{l,i}^{\langle 1 \rangle}
 = 0 .
$
\item {On-device rendering with 2D prefetched:}  
If $p_{l,i}^{3\mathrm{D}} = 0$, $r_{l,i} = 1$, and $p_{l,i}^{2\mathrm{D}} = 1$, then HMD $l$ retrieves the 2D FoV $i$ from memory without transmission and renders it into a 3D FoV using its local CPU processor. The total delay is:
$
T_{l,i}^{\langle 2 \rangle}
 = \frac{Q_{l,i}^{2\mathrm{D}}o}{f_l}.
$
\item {On-device rendering with 2D downloaded:}  
If $p_{l,i}^{3\mathrm{D}} = 0$, $r_{l,i} = 1$, and $p_{l,i}^{2\mathrm{D}} = 0$, then HMD $l$ downloads the 2D FoV $i$ from the MECS and renders it locally. Thus, the total delay is:
$
T_{l,i}^{\langle 3 \rangle}
 = \frac{Q_{l,i}^{2\mathrm{D}}}{R_l}
           + \frac{Q_{l,i}^{2\mathrm{D}} o}{f_l},
$
where $\frac{Q_{l,i}^{2\mathrm{D}}}{R_l}$ is the transmission delay of downloading the 2D FoV via the wireless link, and $\frac{Q_{l,i}^{2\mathrm{D}} o}{f_l}$ is the rendering delay.
\item {Remote 3D transmission:}  
If $p_{l,i}^{3\mathrm{D}} = 0$ and $r_{l,i} = 0$, then HMD $l$ downloads the 3D FoV $i$ from the MECS. The total delay is:
$
T_{l,i}^{\langle 4 \rangle}
 = \frac{Q_{l,i}^{3\mathrm{D}}}{R_l}.
$
\end{enumerate}
\begin{figure}[h]
\centering
\includegraphics[width=0.5\textwidth]{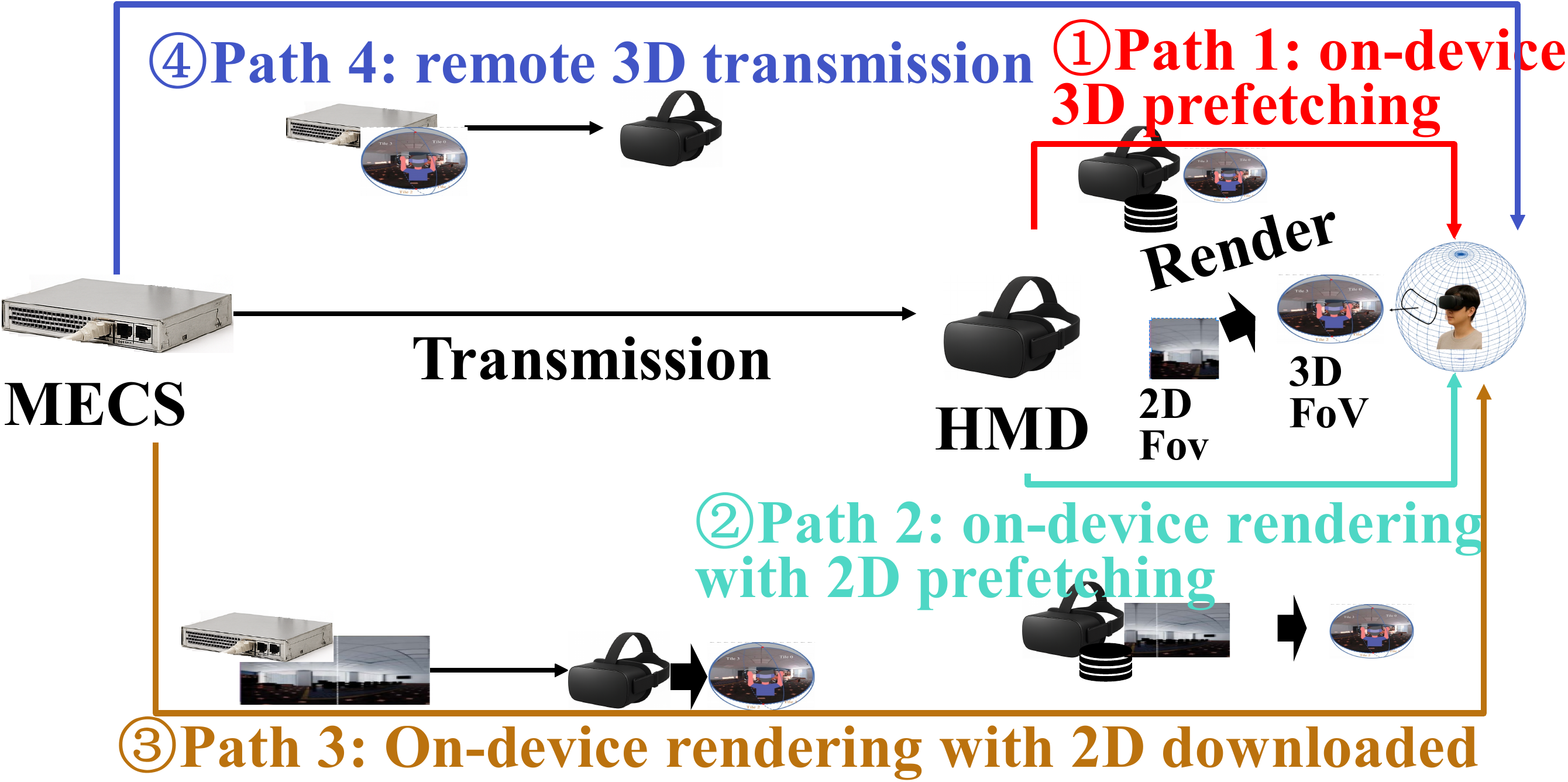}
\caption{Four projection paths.}
\label{fig:paths}
\end{figure}

\subsection{Problem Formulation}
For HMD $l$, the total latency to process FoV $i$ is given by
$T_{l,i} =  p_{l,i}^{3\mathrm{D}} T_{l,i}^{\langle 1 \rangle}
 + (1 - p_{l,i}^{3\mathrm{D}}) r_{l,i} p_{l,i}^{2\mathrm{D}} T_{l,i}^{\langle 2 \rangle}
 + (1 - p_{l,i}^{3\mathrm{D}}) r_{l,i} (1 - p_{l,i}^{2\mathrm{D}}) T_{l,i}^{\langle 3 \rangle}
  + (1 - p_{l,i}^{3\mathrm{D}})(1 - r_{l,i}) T_{l,i}^{\langle 4 \rangle}$
Substituting the four expressions of $T_{l,i}^{\langle j \rangle}, j\in \{1,2,3,4\}$, we have 
$T_{l,i}
=
(1 - p_{l,i}^{3\mathrm{D}})
\Bigg[
r_{l,i}
\left(
\frac{Q_{l,i}^{2\mathrm{D}} o}{f_l}
+
(1 - p_{l,i}^{2\mathrm{D}})
\frac{Q_{l,i}^{2\mathrm{D}}}{R_l(\mathbf a,\boldsymbol P)}
\right)
+(1 - r_{l,i})
\frac{Q_{l,i}^{3\mathrm{D}}}{R_l(\mathbf a,\boldsymbol P)}
\Bigg],
$
where $r_{l,i}\left(\frac{Q_{l,i}^{2\mathrm{D}} o}{f_l} + (1 - p_{l,i}^{2\mathrm{D}})\frac{Q_{l,i}^{2\mathrm{D}}}{R_l(\mathbf a,\boldsymbol P)}\right)$ 
represents the total delay incurred when on-device rendering is selected. If the 2D FoV is prefetched in memory, no transmission delay occurs; otherwise, an additional transmission delay of $\frac{Q_{l,i}^{2\mathrm{D}}}{R_l(\mathbf a,\boldsymbol P)}$ is included. $(1 - r_{l,i})\frac{Q_{l,i}^{3\mathrm{D}}}{R_l(\mathbf a,\boldsymbol P)}$ 
corresponds to the delay of downloading the 3D FoV from the MECS. Finally, the outer multiplicative factor 
$(1 - p_{l,i}^{3\mathrm{D}})$ 
indicates that these delays only apply when the 3D FoV is not prefetched in memory. 
 $R_l(\mathbf a,\boldsymbol P)$ denotes the downlink transmission rate of HMD $l$, which is determined by the RHS beamforming parameters: holographic pattern weights $\mathbf a$ and the feed power allocation $\boldsymbol P$: $\mathbf a \triangleq \{ \mathbf a_u \}_{u=1}^{U}$, $\mathbf a_u = \{ a_{l,u,k} \}_{l=1,k=1}^{L,K}$, and $\boldsymbol P \triangleq \{ \boldsymbol P_u \}_{u=1}^{U}$, $\boldsymbol P_u = \{ P_{u,k}\}_{k=1}^{K}$, $\mathrm{Tr}(\boldsymbol{P}_u\boldsymbol{P}_u^H)=P_{u}^\text{Tx}=\frac{P^\text{Tx}}{U}$.

We now formulate a joint optimization problem that minimizes the average FoV processing delay across all HMDs and FoVs by jointly optimizing the content prefetching $\left(\boldsymbol{p}_l^{3\mathrm{D}}, \boldsymbol{p}_l^{2\mathrm{D}}\right)$, rendering offloading $\mathbf{r}_l$, and RHS holographic beamforming $(\mathbf a,\boldsymbol P)$ decisions.

\begin{problem}[Joint delay minimization]
\label{prob:joint_delay}
\begin{equation}
\begin{aligned}
(\mathcal{P}_1):\;
&\min_{\substack{
\boldsymbol{p}_l^{3\mathrm{D}}, \boldsymbol{p}_l^{2\mathrm{D}}, \mathbf{r}_l,\\
\mathbf a,\,
\boldsymbol P
}}
\quad 
\frac{1}{LV}
\sum_{l=1}^{L}
\sum_{i=1}^{V}
\pi_{l,i}T_{l,i}
\\[4pt]
\mathrm{s.t.}\quad
& \text{(C1)}~
\sum_{i=1}^{V}
\left(
Q_{l,i}^{2\mathrm{D}} p_{l,i}^{2\mathrm{D}}
+
Q_{l,i}^{3\mathrm{D}} p_{l,i}^{3\mathrm{D}}
\right)
\le
\delta,
\quad
\forall l \in \mathcal L,
\\[4pt]
& \text{(C2)}~
\sum_{i=1}^{V}
\pi_{l,i}
\frac{\zeta f_l^2 Q_{l,i}^{2\mathrm{D}} o}{\tau}
r_{l,i}
\le
\bar P_l,
\quad
\forall l \in \mathcal L,
\\[4pt]
& \text{(C3)}~
\frac{Q_{l,i}^{2\mathrm{D}}}{R_l(\mathbf a,\boldsymbol P)}
+
\frac{Q_{l,i}^{2\mathrm{D}} o}{f_l}
\le
\tau,
\quad
\text{if } r_{l,i}=1,\; p_{l,i}^{2\mathrm{D}}=0,
\\[4pt]
& \text{(C4)}~
\frac{Q_{l,i}^{3\mathrm{D}}}{R_l(\mathbf a,\boldsymbol P)}
\le
\tau,
\quad
\text{if } r_{l,i}=0,
\\[4pt]
& \text{(C5)}~
p_{l,i}^{2\mathrm{D}},
\;
p_{l,i}^{3\mathrm{D}},
\;
r_{l,i}
\in
\{0,1\},
\quad
\forall l \in \mathcal L,\; i \in \mathcal V,
\\[4pt]
& \text{(C6)}~
a_{l,u,k} \ge 0,
\quad
\sum_{l=1}^{L}
\sum_{k=1}^{K}
a_{l,u,k}
= 1,
\forall u \in \{1,\ldots,U\},
\\[4pt]
& \text{(C7)}~
P_{u,k} \ge 0, \; \boldsymbol P_u = \{ P_{u,k}\}_{k=1}^{K}, 
\end{aligned}
\label{eq:P1_final}
\end{equation}
\end{problem}

\section{Equivalent Reformulation under the Homogeneity and Staticity Assumption}
We consider the uniform distribution for all FoVs, i.e., $\pi_{l,i} = \frac{1}{V}$ for each $i \in \mathcal{V}$. We also assume that for any HMD $l$,   all FoVs have equal size, i.e.,
$
Q_{l,i}^{2\mathrm{D}} = Q^{2\mathrm{D}}, 
$
$ 
Q_{l,i}^{3\mathrm{D}} = Q^{3\mathrm{D}} = \alpha Q^{2\mathrm{D}},
$
and  mark $f_l$ as $f$, $p_{l,i}^{2\mathrm{D}}$ as $p_{i}^{2\mathrm{D}}$, $p_{l,i}^{3\mathrm{D}}$ as $p_{i}^{3\mathrm{D}}$, $\bar P_l$ as $\bar P$, and $r_{l,i}$ as $r_i$. Under the staticity assumption, the HMDs remain stationary, such that the geometric relationship between the RHS and the HMDs does not change. Consequently, the RHS beamforming configuration is fixed and the downlink transmission rate $R_l$ remains unchanged (marked as $R$). Then, the total delay objective of HMD $l$ simplifies to 
$
\sum_{i=1}^{V}
(1 - p_i^{3\mathrm{D}})
\left[
r_i\left(
\frac{Q^{2\mathrm{D}} o}{f}
+
(1 - p_i^{2\mathrm{D}})\frac{Q^{2\mathrm{D}}}{R}
\right)
+
(1 - r_i)\frac{\alpha Q^{2\mathrm{D}}}{R}
\right]
=
\sum_{i=1}^{V}
(1 - p_i^{3\mathrm{D}})
Q^{2\mathrm{D}}
\left[
r_i\left(
\frac{o}{f}
+
\frac{1 - p_i^{2\mathrm{D}}}{R}
\right)
+
(1 - r_i)\frac{\alpha}{R}
\right].
$

The corresponding constraints become:

\noindent $\mathrm{s. t.}$
\begin{equation}
\begin{aligned}
\text{(C1e)}\quad &
\sum_{i=1}^{V} p_i^{2\mathrm{D}}
+ \alpha p_i^{3\mathrm{D}}
\le \delta,
\\
\text{(C2e)}\quad &
\sum_{i=1}^{V} r_i
\le
\frac{V \bar{P} \tau}{\zeta f^2 Q^{2\mathrm{D}} o}
\triangleq Q_{\mathrm{max}},
\\
\text{(C3e)}\quad &
\frac{Q^{2\mathrm{D}}}{R}
+
\frac{Q^{2\mathrm{D}} o}{f}
< \tau,
\\
\text{(C4e)}\quad &
\frac{Q^{3\mathrm{D}}}{R}
\le \tau.
\\
\text{(C5e)}\quad &
p_{i}^{2\mathrm{D}},
p_{i}^{3\mathrm{D}},
r_{i}
\in \{0,1\},
\quad
\forall l \in \mathcal{L},
 i \in \mathcal{V} .
\end{aligned}
\end{equation}

\begin{proposition}[3D prefetching eliminates the need for 2D prefetching and rendering]
For any FoV $i \in \mathcal{V}$, if $p_i^{3\mathrm{D}} = 1$, then in the optimal solution we have 
$p_i^{2\mathrm{D}} = 0$ and $r_i = 0$.
\end{proposition}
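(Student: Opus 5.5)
We argue by an exchange argument. Take any optimal solution $(\boldsymbol p^{3\mathrm D},\boldsymbol p^{2\mathrm D},\mathbf r)$ of the homogeneous problem and suppose some FoV $i$ has $p_i^{3\mathrm D}=1$ but $p_i^{2\mathrm D}=1$ or $r_i=1$. Construct a modified solution by setting $p_i^{2\mathrm D}\leftarrow 0$ and $r_i\leftarrow 0$ and leaving all other variables unchanged. The proof then amounts to checking two facts: the objective does not change, and every constraint stays satisfied, with (C1e) and (C2e) becoming weakly slacker.

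\emph{Objective.} In the simplified per-HMD delay, the $i$-th summand carries the factor $(1-p_i^{3\mathrm D})$. With $p_i^{3\mathrm D}=1$ this summand is zero whatever the values of $p_i^{2\mathrm D}$ and $r_i$. The other summands do not depend on $(p_i^{2\mathrm D},r_i)$, and $R$ is fixed under the staticity assumption. Hence the objective value is unchanged.

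\emph{Feasibility.} (C1e) is linear with nonnegative coefficients in $p_i^{2\mathrm D}$, so lowering $p_i^{2\mathrm D}$ from $1$ to $0$ reduces the memory usage by one unit $Q^{2\mathrm D}$, or leaves it unchanged. (C2e) is a count of the $r_i$, so setting $r_i=0$ frees one unit of rendering budget. (C3e) and (C4e) are conditional delay constraints that only apply when the content is actually fetched, namely when $p_i^{3\mathrm D}=0$. They involve only $R$, which is unchanged, so they remain satisfied. Binarity (C5e) is preserved.

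\emph{Strictness, and the main obstacle.} So far we only have that the modified solution is also optimal, which says ``without loss of generality'' rather than ``in the optimal solution.'' To get the stronger claim, we argue that the freed memory $Q^{2\mathrm D}$ and the freed rendering unit can be given to some FoV $j$ with $p_j^{3\mathrm D}=0$. For example, set $p_j^{2\mathrm D}=1$ for a $j$ with $r_j=1$, which strictly removes the positive term $Q^{2\mathrm D}/R$. Alternatively, set $r_j=1$ for a $j$ that is downloaded in 3D, which strictly reduces the delay whenever $o/f<(\alpha-1)/R$ or whenever a prefetched 2D copy is available. Such a $j$ exists unless every FoV is already served at zero or minimal delay, in which case the solution is degenerate. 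This yields a strict improvement and contradicts optimality. I expect this strictness step to be the delicate part. If a strict improvement cannot always be guaranteed, I would fall back on stating the proposition as ``there exists an optimal solution with $p_i^{2\mathrm D}=r_i=0$,'' or on adopting the convention that wasted resources are never allocated, which is clearly the intended reading.
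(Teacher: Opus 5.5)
Your exchange argument for the weak form is correct, and it is the same idea as the paper's proof. The paper simply asserts that when $p_i^{3\mathrm{D}}=1$, an extra 2D copy or rendering ``only consumes additional memory or computation resources and increases delay.'' Your observations make that precise: the factor $1-p_i^{3\mathrm{D}}$ kills the $i$-th summand, and (C1e)--(C2e) only become slacker. One minor point: (C3e)--(C4e) in the homogeneous section are unconditional and contain no decision variables, so they hold trivially. Conditioning them on $p_i^{3\mathrm{D}}=0$ is your own reading; the original (C4) is triggered by $r_{l,i}=0$.

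The gap is the strictness step, and it cannot be repaired, because the literal statement (every optimal solution has $p_i^{2\mathrm{D}}=r_i=0$) is false. Your claim that a profitable $j$ exists ``unless every FoV is already served at zero or minimal delay'' fails in the remote-3D priority zone. Take $\alpha=2$, $V=10$, $\delta=2$, $Q_{\mathrm{max}}=5$ (so $\delta/\alpha+Q_{\mathrm{max}}\le V$), and $o/f>1/R$. Then $T^{\langle 2\rangle}>T^{\langle 4\rangle}/2$ and $T^{\langle 3\rangle}>T^{\langle 4\rangle}$. The memory holds either one 3D FoV or at most two 2D FoVs. The corresponding total delays are $9T^{\langle 4\rangle}$, versus $2T^{\langle 2\rangle}+8T^{\langle 4\rangle}$ or $T^{\langle 2\rangle}+9T^{\langle 4\rangle}$, and path 3 is strictly worse than path 4. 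Hence the unique optimal memory allocation prefetches one 3D FoV $i$, serves the other nine over path 4, and leaves the whole rendering budget unused. Setting $r_i=1$ keeps this solution feasible and optimal. The freed rendering unit has no profitable use: moving a path-4 FoV to path 3 strictly increases delay, and path 2 needs memory that is not available. This instance is far from degenerate (nine FoVs sit on the slowest path), yet an optimal solution with $p_i^{3\mathrm{D}}=1$, $r_i=1$ exists. The same happens for $p_i^{2\mathrm{D}}$ whenever leftover memory is below $\alpha Q^{2\mathrm{D}}$ and cannot profitably feed a rendered FoV. For example, with $\alpha=2$, $\delta=3$ and $T^{\langle 2\rangle}>T^{\langle 4\rangle}$, the optimum is one 3D FoV plus one idle memory unit, which can be spent on $p_i^{2\mathrm{D}}=1$ at no cost. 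So neither your redistribution argument nor the paper's ``increases delay'' can give the strong form. The correct statement is your fallback: there exists an optimal solution with $p_i^{2\mathrm{D}}=r_i=0$, or equivalently a tie-breaking rule that never allocates useless resources. Your first two steps already prove this, and it is all the paper needs when it combines Propositions 1--3 ``without loss of optimality.''
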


\begin{proof}
If $p_i^{3\mathrm{D}} = 1$, the 3D FoV $i$ is locally prefetched, and thus no transmission or rendering is required; the equivalent delay is approximately zero.  
Setting $p_i^{2\mathrm{D}} = 1$ and $r_i = 1$ in this case only consumes additional memory or computation resources and increases delay.  
Therefore, the optimal solution must satisfy $p_i^{2\mathrm{D}} = 0$ and $r_i = 0$. 
\end{proof}

\begin{proposition}[2D prefetching must be accompanied by local rendering]
For any FoV $i \in \mathcal{V}$, if $r_i = 0$, then it is without loss of optimality to set $p_i^{2\mathrm{D}} = 0$.
\end{proposition}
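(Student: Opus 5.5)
The plan is an exchange argument. Take any feasible solution of the homogeneous reformulation with $r_i = 0$ and $p_i^{2\mathrm{D}} = 1$ for some FoV $i$. Build a modified solution that agrees with it everywhere except that $p_i^{2\mathrm{D}}$ is set to $0$. Then show that the modified solution is still feasible and that its objective value is no larger. Applying this to an optimal solution, and repeating over all such indices, gives an optimal solution in which $r_i=0$ implies $p_i^{2\mathrm{D}}=0$.

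\textbf{Objective.} Look at the $i$-th summand of the simplified objective,
\[
(1 - p_i^{3\mathrm{D}})\,Q^{2\mathrm{D}}\left[r_i\left(\frac{o}{f}+\frac{1-p_i^{2\mathrm{D}}}{R}\right)+(1-r_i)\frac{\alpha}{R}\right].
\]
When $r_i=0$, the factor multiplying $(1-p_i^{2\mathrm{D}})$ vanishes. The term then reduces to $(1-p_i^{3\mathrm{D}})Q^{2\mathrm{D}}\alpha/R$, which does not depend on $p_i^{2\mathrm{D}}$. This matches the physical picture: with no local rendering, path iv (remote 3D transmission) or path i is used, and neither reads the stored 2D FoV. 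The other summands do not involve $p_i^{2\mathrm{D}}$. Since $R$ is fixed under the staticity assumption, the objective is unchanged by the modification.

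\textbf{Feasibility.} Check each constraint of the homogeneous reformulation.
\begin{itemize}
\item (C1e): its left-hand side decreases by exactly $1$, so the memory constraint still holds.
\item (C2e): it involves only $\mathbf r$, which is unchanged.
\item (C3e): it is imposed only when $r_i=1$, so it is vacuous here.
\item (C4e): it depends on $r_i$ and $R$, both unchanged.
\item (C5e): binarity is preserved.
\end{itemize}
The same reasoning works in the general problem $(\mathcal P_1)$: the objective term there also has $p^{2\mathrm{D}}_{l,i}$ multiplied by $r_{l,i}$, and (C1)–(C4) behave identically.

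\textbf{Main subtlety.} The main point to handle carefully is that the statement says ``without loss of optimality'' rather than ``in every optimal solution''. The objective is only weakly improved, so there can be optimal solutions with a useless $p_i^{2\mathrm{D}}=1$. The argument should therefore only claim existence: there is an optimal solution satisfying the property. If a strict statement is wanted, I would add a remark that the freed memory unit can be reallocated, for example to a $p_j^{2\mathrm{D}}$ of some FoV $j$ with $r_j=1$. This weakly decreases the delay, strictly whenever such a $j$ exists with $p_j^{2\mathrm{D}}=p_j^{3\mathrm{D}}=0$ and $R<\infty$.
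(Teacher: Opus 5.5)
Your proposal is correct and uses the same reasoning as the paper. When $r_i=0$ the stored 2D FoV is never read, so the delay term for FoV $i$ does not depend on $p_i^{2\mathrm{D}}$, and clearing $p_i^{2\mathrm{D}}$ only frees memory while keeping every constraint satisfied. Your version is also more precise than the paper's, which says the unused 2D copy ``increases delay'' and that the FoV must be served by remote 3D transmission, although path i applies when $p_i^{3\mathrm{D}}=1$; your weak-improvement, existence-of-an-optimum reading is the right interpretation of ``without loss of optimality.''
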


\begin{proof}
If $r_i = 0$, no local rendering is performed for FoV $i$, and thus the 2D FoV is never used.  
The FoV must be served by remote 3D transmission, whose delay is independent of the 2D prefetching decision.  
Assigning $p_i^{2\mathrm{D}} = 1$ would only consume unnecessary memory, increasing delay.  
Thus, setting $p_i^{2\mathrm{D}} = 0$ is optimal.  
\end{proof}

\begin{proposition}[the memory constraint is tight at the optimum]
At the optimal solution, the memory constraint satisfies
$
\sum_{i=1}^V p_i^{2\mathrm{D}} + \alpha p_i^{3\mathrm{D}} = \delta,
$
without loss of optimality.
\end{proposition}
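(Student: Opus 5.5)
We argue by an exchange (augmentation) step: any feasible solution whose memory constraint (C1e) is slack can be modified, without increasing the objective and without violating (C1e)--(C5e), until the slack is gone. Equality therefore holds at some optimal solution, which is all that ``without loss of optimality'' requires.

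\textbf{Per-FoV costs.} Fix an optimal solution $(\boldsymbol p^{3\mathrm{D}},\boldsymbol p^{2\mathrm{D}},\mathbf r)$ and suppose $S\triangleq\sum_i p_i^{2\mathrm{D}}+\alpha p_i^{3\mathrm{D}}<\delta$. By the two preceding propositions we may assume every FoV is in one of four states:
\begin{itemize}
\item[(a)] 3D-prefetched, with cost $0$;
\item[(b)] 2D-prefetched and rendered locally, with cost $Q^{2\mathrm{D}}o/f$;
\item[(c)] rendered locally from a downloaded 2D FoV, with cost $Q^{2\mathrm{D}}(o/f+1/R)$;
\item[(d)] remote 3D transmission, with cost $\alpha Q^{2\mathrm{D}}/R$.
\end{itemize}
Only states (a) and (b) use memory, namely $\alpha$ and $1$ units respectively.

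\textbf{Exchange moves.}
\begin{itemize}
\item If some FoV is in state (c) and $\delta-S\ge 1$, set its $p_i^{2\mathrm{D}}=1$. Memory rises by one unit, $r_i$ is unchanged so (C2e) still holds, and the cost falls by $Q^{2\mathrm{D}}/R>0$.
\item If some FoV is in state (b) and $\delta-S\ge\alpha-1$, or in state (c)/(d) and $\delta-S\ge\alpha$, upgrade it to state (a), setting $p^{3\mathrm{D}}_i=1$ and $p_i^{2\mathrm{D}}=r_i=0$. This frees rendering budget, so (C2e) is preserved, and the cost drops to $0$.
\item Constraints (C3e) and (C4e) do not depend on the prefetching variables, so none of these moves affects them.
\end{itemize}
Each move strictly decreases the delay unless it acts on a FoV that already costs zero. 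Hence at an optimum no move is available, and the slack satisfies $\delta-S<1$ whenever a state-(c) FoV exists, and $\delta-S<\alpha-1$ whenever a state-(b) FoV exists, and so on.

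\textbf{Main obstacle: integrality.} Since $\alpha\ge2$ and $\delta$ is an integer, residual slack smaller than $\alpha$ may remain, and an integer combination $n_2+\alpha n_3$ need not hit $\delta$ exactly. I would handle this in two ways.

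First, if $\alpha$ is an integer, as is implicitly assumed by the memory accounting in multiples of $Q^{2\mathrm{D}}$, all occupancies are integers. Any leftover unit can then be filled by 2D-prefetching some FoV. If a state-(c) FoV exists, this strictly helps. Otherwise every non-(a) FoV is in state (b) or (d). Switching a state-(d) FoV to 2D-prefetch-plus-render is feasible only if rendering budget remains, and it lowers the cost because $o/f\le \alpha/R$ is implied by (C3e)/(C4e) in the regime of interest.

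Second, in the remaining degenerate cases, where every FoV is already at zero cost or the rendering budget is exhausted, I would pad memory with a 2D prefetch of an arbitrary FoV. This contributes zero to the objective, because $p_i^{2\mathrm{D}}$ enters the cost only through $r_i(1-p_i^{2\mathrm{D}})$. Padding therefore yields an optimal solution with equality unless $V$ is too small to absorb $\delta$, i.e. $\delta>\alpha V$. In that trivial case every FoV is 3D-prefetched and the constraint is understood as $\min(\delta,\alpha V)$.

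I expect this integrality and padding bookkeeping to be the delicate part. The monotonicity argument above carries the substance of the proof.
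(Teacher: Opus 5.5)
Your exchange step is the paper's own argument, done more carefully. The paper uses only your (d)$\to$(a) upgrade and tacitly assumes that a remote-3D FoV exists and that the slack is at least $\alpha$. Your additional moves, together with objective-neutral padding, are what actually handle slack below $\alpha$; padding works because a 2D prefetch on a FoV with $r_i=0$ or $p_i^{3\mathrm{D}}=1$ never enters the objective. The padded optimum stores 2D FoVs that are never rendered, so it leaves the normal form of Proposition 2. This is still compatible with $(\mathcal{P}_1')$, where $q=\min\{p^{2\mathrm{D}},r\}$ allows $p^{2\mathrm{D}}>r$. Your claim that (C3e)/(C4e) imply $o/f\le\alpha/R$ is false: they bound $Q^{2\mathrm{D}}(1/R+o/f)$ and $\alpha Q^{2\mathrm{D}}/R$ by $\tau$ separately. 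That branch is dispensable, though, since at an optimum the (d)$\to$(b) switch cannot strictly help and padding covers that case anyway.

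The genuine gap is the exceptional case, which is not $\delta>\alpha V$. Padding acts only on FoVs with $p_i^{2\mathrm{D}}=0$. At a slack optimum no FoV is in state (c), so these are the $V-n_b$ FoVs in states (a) and (d), where $n_b$ is the number in state (b). The procedure stalls when $V-n_b$ is below the residual slack, and that slack can reach $\alpha-2$ when every non-3D FoV is in state (b). In that situation the proposition itself is false. Take $V=2$, $\alpha=4$, $\delta=4$, $Q_{\mathrm{max}}=2$, $Q^{2\mathrm{D}}/R=1$, $Q^{2\mathrm{D}}o/f=\epsilon<1$, $\tau>4$. Putting both FoVs in state (b) costs $2\epsilon$ and uses 2 units. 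The only way to use exactly 4 units is one 3D prefetch and no 2D prefetch, whose best completion costs $1+\epsilon>2\epsilon$; yet $\delta\le\alpha V$.

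So no bookkeeping closes this case, and you must add a hypothesis. Besides the integrality of $\alpha$ you already impose, a sufficient one is $\delta\le\alpha V$ together with $V-Q_{\mathrm{max}}\ge\alpha-1$. At a slack optimum not every FoV is in state (a), since otherwise $S=\alpha V\ge\delta$. Your upgrade moves then give $\delta-S\le\alpha-1$, while $V-n_b\ge V-Q_{\mathrm{max}}\ge\alpha-1$ FoVs are available for padding. The paper's standing assumption $\delta/\alpha+Q_{\mathrm{max}}\le V$ implies both conditions when $\alpha=2$, which is the simulated case.
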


\begin{proof}
Consider any feasible solution that does not fully utilize the available memory. Suppose there exists a FoV $j$ currently served through the remote 3D transmission path (i.e., $p_j^{3\mathrm{D}} = 0$ and $r_j = 0$), whose delay is $\frac{Q^{3\mathrm{D}}}{R}$. If there is still remaining memory, it can be used to store the 3D FoV (i.e., setting $p_j^{3\mathrm{D}} = 1$), thereby reducing its delay from $\frac{Q^{3\mathrm{D}}}{R}$ to approximately zero. This strictly decreases the total objective. Therefore, at optimality, it is impossible to have unused memory while the constraint is not tight. Hence, the memory constraint must hold with equality.  
\end{proof}

Combining Propositions 1, 2, and 3, we can obtain an equivalent problem in terms of aggregated variables. Define the aggregated decision variables for an HMD as $p^{3\mathrm{D}} \triangleq \sum_{i=1}^V p_i^{3\mathrm{D}}$, $p^{2\mathrm{D}} \triangleq \sum_{i=1}^V p_i^{2\mathrm{D}}$, $r \triangleq \sum_{i=1}^V r_i$, we can assume ``without loss of optimality'' the following structure:
\begin{itemize}
    \item FoVs $i = 1, \ldots, p^{3\mathrm{D}}$: on-device 3D prefetching;
    \item The next $q \triangleq \min\{p^{2\mathrm{D}}, r\}$ FoVs:  on-device rendering with 2D prefetched;
    \item The next $r - q$ FoVs: on-device rendering with 2D downloaded;
    \item The remaining $V - p^{3\mathrm{D}} - r$ FoVs: remote 3D transmission.
\end{itemize}

Let $T^{\langle 2 \rangle}
 \triangleq \frac{Q^{2\mathrm{D}_o}}{f}$, $T^{\langle 3 \rangle}
  \triangleq \frac{Q^{2\mathrm{D}}}{R} + \frac{Q^{2\mathrm{D}o}}{f}$, $T^{\langle 4 \rangle}
  \triangleq \frac{Q^{3\mathrm{D}}}{R} = \frac{\alpha Q^{2\mathrm{D}}}{R}$. The average delay can then be written as:
$
\bar{T} = \frac{1}{V} \left[ q T^{\langle 2 \rangle}
  + (r - q) T^{\langle 3 \rangle}
  + \left(V - p^{3\mathrm{D}} - r\right) T^{\langle 4 \rangle}
  \right],
$
where $q = \min\{p^{2\mathrm{D}}, r\}$. By Proposition 3, the memory constraint holds as equality:
$
p^{2\mathrm{D}} + \alpha p^{3\mathrm{D}} = \delta
 \Rightarrow 
p^{3\mathrm{D}} = \frac{\delta -p^{2\mathrm{D}}}{\alpha}.
$
The average power constraint becomes:
$
r \leq Q_{\mathrm{max}}, 
Q_{\mathrm{max}} \triangleq \frac{V \bar{P} \tau}{\zeta f^2 Q^{2\mathrm{D}o}}.
$
We also assume that
$
\frac{\delta}{\alpha} + Q_{\mathrm{max}} \leq V,
$
so that the total number of FoVs served by prefetching and local rendering does not exceed $V$.

We now obtain the equivalent problem:  
\newtheorem*{problem1'}{\textbf{Problem 1'} \textnormal{(equivalent $(\mathcal{P}_1)$ under homogeneity and staticity assumption)}}
\begin{problem1'}
\begin{equation}
\begin{aligned}
(\mathcal{P}_1') &\min_{p^{2\mathrm{D}},p^{3\mathrm{D}}, r}  \bar{T}\left(p^{2\mathrm{D}}, p^{3\mathrm{D}}, r\right)  = \min_{p^{2\mathrm{D}}, r} \frac{1}{V} \\
 &\left[ q T^{\langle 2 \rangle}
 + (r - q) T^{\langle 3 \rangle}
 + \left(V - \frac{\delta - p^{2\mathrm{D}}}{\alpha} - r\right) T^{\langle 4 \rangle}
 \right] \\
\mathrm{s.t.}\quad &  p^{2\mathrm{D}} \in \{0, 1, \dots, \delta\}, p^{3\mathrm{D}} = \frac{\delta - p^{2\mathrm{D}}}{\alpha}, p^{3\mathrm{D}} + r \leq V, \\
& r \in \{0, 1, \dots, Q_{\mathrm{max}}\}, q = \min \left\{p^{2\mathrm{D}}, r \right\} 
\end{aligned}
\end{equation}
\end{problem1'}

The core comparison is between the delay of ``on-device rendering with 2D downloaded'' (path 3) and ``remote 3D transmission'' (path 4):
$
T^{\langle 3 \rangle}
  \stackrel{?}{\gtrless}  T^{\langle 4 \rangle}
  \Leftrightarrow \frac{Q^{2\mathrm{D}}}{R} + \frac{Q^{2\mathrm{D}} o}{f} \stackrel{?}{\gtrless} \frac{\alpha Q^{2\mathrm{D}}}{R}.
$
Dividing both sides by $Q^{2\mathrm{D}}$:
$
\frac{1}{R} + \frac{o}{f} \stackrel{?}{\gtrless} \frac{\alpha}{R} \Longleftrightarrow \frac{o}{f} \stackrel{?}{\gtrless} \frac{\alpha - 1}{R}.
$

Thus, we define two zones.
\begin{definition}[remote 3D transmission priority zone] The zone where remote 3D transmission is prioritized is defined by the condition:
$
T^{\langle 3 \rangle}
 \geq T^{\langle 4 \rangle}
 \Leftrightarrow \frac{o}{f} \geq \frac{\alpha - 1}{R},
$
where local rendering (without prefetching) is less efficient or slower than remote 3D transmission.
\end{definition}

\begin{definition}[on-device rendering priority zone]
The zone where local rendering is prioritized is defined by the condition:
$
T^{\langle 3 \rangle}
  < T^{\langle 4 \rangle}
  \Longleftrightarrow o \leq \frac{\alpha - 1}{f},
$
where local rendering is faster than remote 3D transmission.
\end{definition}

The following Proposition \ref{theorem1} and Proposition \ref{theorem2} provide the optimal strategies for $\mathcal{P}_1'$.

\begin{proposition}[the optimal strategy for the remote 3D transmission priority zone is closed-form]
\label{theorem1}
Assume that for a certain HMD, the constants $R$ and $f$ satisfy
$
\frac{o}{f} \geq \frac{\alpha - 1}{R}, \quad \text{i.e.,}  T^{\langle 3 \rangle}
  \geq T^{\langle 4 \rangle}
 ,
$
and the memory and power budget satisfy
$
\frac{\delta}{\alpha} + Q_{\mathrm{max}} \leq V.
$
Then, the optimal joint strategy for $\mathcal{P}_1'$ is $p^{2\mathrm{D}^{*}} = \min \left\{\delta, Q_{\mathrm{max}}\right\}$, 
$r^* = p^{2\mathrm{D}^{*}}$, $p^{3\mathrm{D}^{*}} = \frac{\delta - p^{2\mathrm{D}^{*}}}{\alpha}$.
The corresponding minimum average delay is:
$
\bar{T}^* = T^{\langle 4 \rangle}
  - \frac{T^{\langle 4 \rangle}
 }{V} \cdot \frac{\delta}{\alpha} - \frac{T^{\langle 4 \rangle}
  - T^{\langle 2 \rangle}
 }{V} \cdot p^{2\mathrm{D}^{*}},
$
where $T^{\langle 2 \rangle}
  = \frac{Q^{2\mathrm{D}o}}{f}$ and $T^{\langle 4 \rangle}
  = \frac{Q^{3\mathrm{D}}}{R}$.
\end{proposition}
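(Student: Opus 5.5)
The plan is to reduce $\mathcal{P}_1'$ to a one-dimensional problem in $p^{2\mathrm{D}}$ by a case split on the sign of $r-p^{2\mathrm{D}}$, which removes the $\min$ in $q$. Write $p \triangleq p^{2\mathrm{D}}$ and substitute $p^{3\mathrm{D}}=(\delta-p)/\alpha$, which holds by Proposition 3. Then
$$V\bar T = qT^{\langle 2\rangle}+(r-q)T^{\langle 3\rangle}+\Big(V-\tfrac{\delta}{\alpha}+\tfrac{p}{\alpha}-r\Big)T^{\langle 4\rangle}.$$
On each of the regions $r\le p$ and $r\ge p$, this expression is affine in $(p,r)$. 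The feasible set is the integer box $p\in\{0,\dots,\delta\}$, $r\in\{0,\dots,Q_{\max}\}$, and the assumption $\frac{\delta}{\alpha}+Q_{\max}\le V$ guarantees $p^{3\mathrm{D}}+r\le V$ throughout, so that constraint can be dropped.

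\emph{Case $r\ge p$.} Here $q=p$ and $\partial(V\bar T)/\partial r=T^{\langle 3\rangle}-T^{\langle 4\rangle}\ge 0$ by the zone hypothesis. Decreasing $r$ down to $p$ therefore never increases the delay. Economically, this says that every FoV rendered from a downloaded 2D FoV (path 3) is weakly worse than remote 3D transmission (path 4).

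\emph{Case $r\le p$.} Here $q=r$ and $\partial(V\bar T)/\partial p=T^{\langle 4\rangle}/\alpha>0$. Any 2D-prefetched FoV that is not rendered is wasted memory, consistent with Proposition 2, so it is optimal to increase $r$ up to $p$ or to lower $p$ to $r$. In both cases we land on the diagonal $r=p$, which is feasible only when $p\le\min\{\delta,Q_{\max}\}$.

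On the diagonal the objective becomes
$$V\bar T(p)=VT^{\langle 4\rangle}-\tfrac{\delta}{\alpha}T^{\langle 4\rangle}-p\Big(T^{\langle 4\rangle}-T^{\langle 2\rangle}-\tfrac{T^{\langle 4\rangle}}{\alpha}\Big).$$
This is linear in $p$, so the minimum is attained at an endpoint of $\{0,\dots,\min\{\delta,Q_{\max}\}\}$. The claimed optimizer $p^{2\mathrm{D}*}=\min\{\delta,Q_{\max}\}$ is obtained when the slope coefficient is nonnegative. Plugging $p^{2\mathrm{D}*}$ into the expression and dividing by $V$ yields the stated $\bar T^*$.

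The main obstacle is the sign of this slope and the exact bookkeeping of the memory term. Each unit of memory used for a 2D FoV gains $T^{\langle 4\rangle}-T^{\langle 2\rangle}$ through local rendering of a prefetched 2D FoV. At the same time it forfeits $1/\alpha$ of a 3D-prefetched FoV, worth $T^{\langle 4\rangle}/\alpha$. My derivation therefore produces an extra term $+\frac{T^{\langle 4\rangle}}{V\alpha}p^{2\mathrm{D}*}$ that does not appear in the stated $\bar T^*$.

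To close this gap, I would first recheck whether the intended accounting of freed 3D slots is consistent with the integrality of $(\delta-p)/\alpha$. I would then verify that the regime of interest implies $T^{\langle 4\rangle}(1-1/\alpha)\ge T^{\langle 2\rangle}$, or add this as an explicit hypothesis. Note that $T^{\langle 4\rangle}\ge T^{\langle 2\rangle}$ alone does not follow from $T^{\langle 3\rangle}\ge T^{\langle 4\rangle}$. If the reverse inequality holds, the same argument shows the optimum switches to the other endpoint $p^{2\mathrm{D}}=r=0$ (pure 3D prefetching), and I would state that as the complementary case.
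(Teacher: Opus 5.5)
\textbf{The gap: the sign you leave open is fixed by the hypothesis, against the claim.} Your reduction to the diagonal $r=p^{2\mathrm{D}}$ is sound, and it follows the same route as the paper's Appendix~A. Your suspicion about the extra term is also justified. The missing observation is
\[
T^{\langle 3\rangle}-T^{\langle 2\rangle}=\frac{Q^{2\mathrm{D}}}{R}=\frac{T^{\langle 4\rangle}}{\alpha}.
\]
So your diagonal slope satisfies $T^{\langle 4\rangle}-T^{\langle 2\rangle}-T^{\langle 4\rangle}/\alpha=T^{\langle 4\rangle}-T^{\langle 3\rangle}$. Equivalently, the zone hypothesis $o/f\ge(\alpha-1)/R$ \emph{is} the inequality $T^{\langle 2\rangle}\ge T^{\langle 4\rangle}(1-1/\alpha)$. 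Merging your two cases, the following holds on the whole feasible set:
\[
V\bar T=VT^{\langle 4\rangle}-\frac{\delta}{\alpha}T^{\langle 4\rangle}+\frac{T^{\langle 4\rangle}}{\alpha}\max\{p^{2\mathrm{D}}-r,0\}+r\,(T^{\langle 3\rangle}-T^{\langle 4\rangle}).
\]
Under the zone hypothesis every term after the constant is nonnegative. Hence the optimum is $p^{2\mathrm{D}*}=r^*=0$, $p^{3\mathrm{D}*}=\delta/\alpha$, with $\bar T^*=T^{\langle 4\rangle}(1-\frac{\delta}{\alpha V})$, and it is unique when $T^{\langle 3\rangle}>T^{\langle 4\rangle}$. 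Your ``complementary case'' is therefore the only case the statement covers.

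The extra hypothesis you propose, $T^{\langle 4\rangle}(1-1/\alpha)\ge T^{\langle 2\rangle}$, is equivalent to $T^{\langle 3\rangle}\le T^{\langle 4\rangle}$. Together with the zone assumption it leaves only the tie $T^{\langle 3\rangle}=T^{\langle 4\rangle}$. There every diagonal point is optimal, but the stated $\bar T^*$ is still too small by $\frac{T^{\langle 4\rangle}}{V\alpha}p^{2\mathrm{D}*}$.

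\textbf{Consequence: the statement is false as given.} The proposal cannot be completed into a proof, because under the stated hypotheses both the claimed optimizer and the claimed $\bar T^*$ are wrong. Take $Q^{2\mathrm{D}}=R=1$, $\alpha=2$, $o/f=1.5$, $V=10$, $\delta=4$, $Q_{\mathrm{max}}=2$. Then $T^{\langle 2\rangle}=1.5$, $T^{\langle 3\rangle}=2.5$, $T^{\langle 4\rangle}=2$, which is inside the zone, and $\delta/\alpha+Q_{\mathrm{max}}\le V$ holds. The claimed strategy $p^{2\mathrm{D}}=r=2$, $p^{3\mathrm{D}}=1$ yields $\bar T=1.7$. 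The strategy $p^{2\mathrm{D}}=r=0$, $p^{3\mathrm{D}}=2$ yields $\bar T=1.6$, which is optimal by the identity above. The stated formula returns $1.5$, which no feasible point attains.

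\textbf{The paper's proof has the same two flaws.} It settles the sign by declaring $T^{\langle 2\rangle}<T^{\langle 4\rangle}(1-1/\alpha)$ ``typical''. That is precisely $o/f<(\alpha-1)/R$, the negation of the proposition's own hypothesis. Its final ``substituting yields the stated expression'' also drops the same $+\frac{T^{\langle 4\rangle}}{V\alpha}p^{2\mathrm{D}*}$ term you found.

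The correct outcome of your argument is the corrected statement above, or equivalently a counterexample to the given one. It is not an additional hypothesis.
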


\begin{proof}
See Appendix A.
\end{proof}

\begin{proposition}[the optimal strategy for the on-device rendering priority zone is closed-form]
\label{theorem2}
Assume that for a certain HMD, the constants $R$ and $f$ satisfy
$
\frac{o}{f} < \frac{\alpha - 1}{R}, \quad \text{i.e., }  T^{\langle 3 \rangle}
  < T^{\langle 4 \rangle}
 ,
$
and that the memory and power budget satisfy
$
\frac{\delta}{\alpha} + Q_{\mathrm{max}} \leq V.
$
Then, the optimal joint strategy for $\mathcal{P}_1'$ is $p^{2\mathrm{D}^*} = \min \left\{\delta, Q_{\mathrm{max}}\right\}$, $r^* = Q_{\mathrm{max}}$, $p^{3\mathrm{D}^*} = \frac{\delta - p^{2\mathrm{D}^*}}{\alpha}$. The corresponding minimum average delay is:
$
\bar{T}^* = T^{\langle 4 \rangle}
  - \frac{T^{\langle 4 \rangle}
 }{V} \cdot \frac{\delta}{\alpha} - \frac{T^{\langle 4 \rangle}
  - T^{\langle 2 \rangle}
 }{V} p^{2\mathrm{D}^*} - \frac{T^{\langle 4 \rangle}
  - T^{\langle 3 \rangle}
 }{V} \left(Q_{\mathrm{max}} - p^{2\mathrm{D}^*}\right).
$
\end{proposition}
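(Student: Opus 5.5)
The plan is to treat $(\mathcal{P}_1')$ as a piecewise-linear integer program in the two free variables $(p^{2\mathrm{D}}, r)$, in parallel with Appendix A for Proposition~\ref{theorem1}. Substituting the tight memory constraint of Proposition 3, $p^{3\mathrm{D}}=(\delta-p^{2\mathrm{D}})/\alpha$, into $\bar T$ gives
\begin{equation*}
V\bar T = V T^{\langle 4\rangle} - \frac{\delta}{\alpha}T^{\langle 4\rangle} + \frac{p^{2\mathrm{D}}}{\alpha}T^{\langle 4\rangle} - q\left(T^{\langle 4\rangle}-T^{\langle 2\rangle}\right) - (r-q)\left(T^{\langle 4\rangle}-T^{\langle 3\rangle}\right),
\end{equation*}
with $q=\min\{p^{2\mathrm{D}},r\}$. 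Every path is then measured as a saving relative to remote 3D transmission. In the present zone all three savings coefficients are positive. Indeed $T^{\langle 4\rangle}>0$, and $T^{\langle 4\rangle}-T^{\langle 3\rangle}>0$ by hypothesis. Also $T^{\langle 4\rangle}-T^{\langle 2\rangle}=(T^{\langle 4\rangle}-T^{\langle 3\rangle})+Q^{2\mathrm{D}}/R>0$.

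\emph{Step 1: optimize $r$ for fixed $p^{2\mathrm{D}}$.} On $r\le p^{2\mathrm{D}}$ we have $q=r$, so increasing $r$ by one lowers $V\bar T$ by $T^{\langle 4\rangle}-T^{\langle 2\rangle}>0$. On $r> p^{2\mathrm{D}}$, $q$ is constant, so increasing $r$ lowers $V\bar T$ by $T^{\langle 4\rangle}-T^{\langle 3\rangle}>0$. Hence $V\bar T$ is strictly decreasing in $r$, and $r$ should be pushed to its upper bound. The constraint $p^{3\mathrm{D}}+r\le V$ never binds before $r=Q_{\max}$, because $p^{3\mathrm{D}}\le\delta/\alpha$ and we assume $\delta/\alpha+Q_{\max}\le V$. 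Therefore $r^*=Q_{\max}$ regardless of $p^{2\mathrm{D}}$.

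\emph{Step 2: optimize $p^{2\mathrm{D}}$ with $r=Q_{\max}$.} Split into two cases.
\begin{itemize}
\item If $p^{2\mathrm{D}}> Q_{\max}$, then $q=Q_{\max}$ is fixed and $V\bar T$ increases with slope $T^{\langle 4\rangle}/\alpha>0$. Such choices waste memory on 2D FoVs that are never rendered, which is consistent with Proposition 2, so $p^{2\mathrm{D}}\le Q_{\max}$ at the optimum.
\item If $p^{2\mathrm{D}}\le \min\{\delta,Q_{\max}\}$, then $q=p^{2\mathrm{D}}$ and the slope in $p^{2\mathrm{D}}$ is $T^{\langle 4\rangle}/\alpha-(T^{\langle 4\rangle}-T^{\langle 2\rangle})+(T^{\langle 4\rangle}-T^{\langle 3\rangle}) = Q^{2\mathrm{D}}/R - (T^{\langle 3\rangle}-T^{\langle 2\rangle})$. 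This equals $0$, since $T^{\langle 3\rangle}-T^{\langle 2\rangle}=Q^{2\mathrm{D}}/R$.
\end{itemize}
So on this range the objective is flat. Converting $\alpha$ units of 3D-prefetch memory into $\alpha$ 2D-prefetched FoVs exactly trades the lost zero-delay slots against the saved downloads. Thus $p^{2\mathrm{D}*}=\min\{\delta,Q_{\max}\}$ is optimal, though not uniquely, and $p^{3\mathrm{D}*}$ follows from Proposition 3. Throughout, I take $p^{3\mathrm{D}}$ to be an integer, as implicitly assumed in $(\mathcal{P}_1')$.

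\emph{Step 3: the optimal value.} Substituting $r^*$ and $p^{2\mathrm{D}*}$ into the expression above gives the minimum delay. The main obstacle is reconciling this with the displayed formula. Direct substitution produces an extra term $+p^{2\mathrm{D}*}T^{\langle 4\rangle}/(\alpha V)$, coming from the 3D slots given up. I will carry this term explicitly. I will then check whether it is cancelled by the zero-slope identity $T^{\langle 4\rangle}/\alpha = T^{\langle 3\rangle}-T^{\langle 2\rangle}$, which lets one rewrite the value in the stated form. If it is not, the stated expression should be corrected to include it, or equivalently be evaluated at $p^{2\mathrm{D}}=0$, which attains the same optimal value because the objective is flat on $[0,\min\{\delta,Q_{\max}\}]$.
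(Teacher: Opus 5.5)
Your Steps 1 and 2 follow the same route as the paper's Appendix B. Both push $r$ to $Q_{\mathrm{max}}$ because every rendered FoV beats remote 3D, and then split on $p^{2\mathrm{D}}\le r$ versus $p^{2\mathrm{D}}\ge r$. Your version is correct and sharper than the paper's. Appendix B asserts that $\bar T$ decreases in $p^{2\mathrm{D}}$ on $p^{2\mathrm{D}}\le r$, but you correctly find the slope is exactly $0$. Giving up one 3D-prefetched FoV costs $T^{\langle 4\rangle}$, and the freed $\alpha$ memory units turn $\alpha$ path-3 FoVs into path-2 FoVs, saving $\alpha Q^{2\mathrm{D}}/R=T^{\langle 4\rangle}$. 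So $p^{2\mathrm{D}*}=\min\{\delta,Q_{\mathrm{max}}\}$ is an optimizer, but not the only one. You should drop the remark that $p^{3\mathrm{D}}$ is an integer. $(\mathcal{P}_1')$ does not impose it, and if it did, $(\delta-Q_{\mathrm{max}})/\alpha$ need not be integral, so the claimed optimizer could be infeasible.

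Step 3 should not be left as a conditional check, because the cancellation cannot happen. The identity $T^{\langle 4\rangle}/\alpha=T^{\langle 3\rangle}-T^{\langle 2\rangle}=Q^{2\mathrm{D}}/R$ only rewrites your extra term as $p^{2\mathrm{D}*}(T^{\langle 3\rangle}-T^{\langle 2\rangle})/V$. That then merges with the last two terms, giving
\[
\bar T^*=T^{\langle 4\rangle}-\frac{T^{\langle 4\rangle}}{V}\cdot\frac{\delta}{\alpha}-\frac{T^{\langle 4\rangle}-T^{\langle 3\rangle}}{V}\,Q_{\mathrm{max}}.
\]
This exceeds the displayed formula by $p^{2\mathrm{D}*}Q^{2\mathrm{D}}/(RV)>0$ whenever $\min\{\delta,Q_{\mathrm{max}}\}>0$. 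The displayed value is what you would get by keeping all $\delta/\alpha$ 3D slots \emph{and} prefetching $p^{2\mathrm{D}*}$ 2D FoVs, which violates the memory equality. It lies strictly below the true minimum and cannot be proved.

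The paper's Appendix B reaches it only through an unexplained ``expanding and simplifying'' step. That step drops exactly your term $+p^{2\mathrm{D}*}T^{\langle 4\rangle}/(\alpha V)$. The same term appears in Appendix A's own display but is missing from the value stated in Proposition~\ref{theorem1}. Your proof should therefore end with the corrected value above, which is what your ``evaluate at $p^{2\mathrm{D}}=0$'' fallback produces. It should also say explicitly that the optimizer in the statement is right but its closed-form delay is not.
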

\begin{proof}
See Appendix B.
\end{proof}  

For the homogeneous and static multi-user case, since each FoV is homogeneous, the memory and computation constraints are independent of each other for HMDs, and the downlink transmission rate remains constant, we solve by:
\begin{enumerate}
\renewcommand{\labelenumi}{\roman{enumi}:}  
    \item For each HMD $ l$, we can write a separate Problem 1'.
    \item Based on the parameters of the HMD $\left(R_l, f_l, \bar{P}_l\right)$, we determine whether it belongs to the remote 3D transmission priority zone or the on-device rendering priority zone.
    \item Apply Proposition \ref{theorem1} or Proposition \ref{theorem2} accordingly to obtain $\left(p_l^{3 \mathrm{D}^*}, p_l^{2 \mathrm{D}^*}, r_l^*\right)$.
    \item The optimal joint strategy for the entire multi-user system is the collection of optimal strategies for all HMDs.
\end{enumerate}

\section{Equivalent Reformulation under the Heterogeneity and Non-Staticity  Assumption}
In heterogeneous setting, FoV parameters differ across HMDs, while the THz downlink rate depends on RHS beamforming. Because an RHS can reconfigure its holographic beams in real time, user mobility (e.g., an HMD moving indoors) requires frequent short-timescale beam updates, whereas prefetching and rendering decisions evolve on a much slower timescale. This natural separation motivates decomposing the joint problem into a long-timescale prefetching–rendering subproblem ($\mathcal{P}$1-1) using long-term equivalent rates, and a short-timescale RHS beamforming subproblem ($\mathcal{P}$1-2) that rapidly adapts to user movement. $\mathcal{P}$1-1 and $\mathcal{P}$1-2 are coupled via the downlink rate vector ${\mathbf{R}} = ({R}_1, \ldots, {R}_L)$.

\newtheorem*{problem1-1}{\textbf{Problem 1-1} \textnormal{(prefetching-rendering optimization (given $\bar{\mathbf{R}}$))}}
\begin{problem1-1}
We assume that the physical layer has received a long-term equivalent rate vector ${\mathbf{R}}$ on a faster timescale, then:
\begin{equation}
(\mathcal{P}1\text{-}1):  \min_{\boldsymbol{p}, \mathbf{r}} \bar{T} = \min_{\boldsymbol{p}, \mathbf{r}} \frac{1}{L} \sum_{l=1}^L \sum_{i=1}^V \pi_{l, i} T_{l, i}.
\end{equation}
\end{problem1-1}

We introduce the path selection variable $x_{l, i, j}$ for each HMD $l$, FoV $i$, and path $j \in \{1, 2, 3, 4\}$, where $x_{l, i, j} \in \{0, 1\}$, and if $x_{l, i, j} = 1$, it means FoV $(l, i)$ follows path $j$, otherwise it does not. For each $(l, i)$, exactly one path must be selected:
$
\sum_{j=1}^4 x_{l, i, j} = 1,  \forall l,  i.
$
The delays corresponding to the four paths are: $T_{l, i}^{\langle 1 \rangle}
 = 0$,  $T_{l, i}^{\langle 2 \rangle}
 = \frac{Q_{l, i}^{2 \mathrm{D}} o}{f_l}$, $T_{l, i}^{\langle 3 \rangle}
 = \frac{Q_{l, i}^{2 \mathrm{D}}}{R_l} + \frac{Q_{l, i}^{2 \mathrm{D}} o}{f_l}$, $ T_{l, i}^{\langle 4 \rangle}
 = \frac{Q_{l, i}^{3 \mathrm{D}}}{R_l}.$ Thus, the average delay for HMD $l$ can be written as:
$
\sum_{{i}=1}^{V} \pi_{l, i} \sum_{{j}=1}^4 T_{l, i}^{\langle j \rangle}
 {x}_{l, {i}, {j}}
$.
Adding up for all users and dividing by $L$, we get the total delay target:
$
\bar{{T}} = \frac{1}{{L}} \sum_{{l}=1}^{{L}} \sum_{{i}=1}^{{V}} \pi_{l, i} \sum_{{j}=1}^4 T_{l, i}^{\langle j \rangle}
 {x}_{l, {i}, {j}}
$.

For each path, define the memory consumption $ C_{l, i, j}^{(\mathrm{m})} $ and  power consumption $C_{l, i, j}^{(\mathrm{p})}$:
\begin{equation}
C_{l, i, j}^{(\mathrm{m})} \triangleq
\begin{cases}
Q_{l, i}^{3 \mathrm{D}}, & j=1, \\
Q_{l, i}^{2 \mathrm{D}}, & j=2, \\
0, & j=3,4,
\end{cases}
C_{l, i, j}^{(\mathrm{p})} \triangleq
\begin{cases}
\pi_{l, i} \frac{\zeta f_l^2 D_{l, i}^{2 \mathrm{D}} o}{\tau}, & j=2,3, \\
0, & j=1,4.
\end{cases}
\end{equation}

The problem is equivalent to the following form:
\newtheorem*{problem1-1-H1}{\textbf{Problem 1-1-H1} \textnormal{(MMKP problem)}} 
\begin{problem1-1-H1} 
\begin{equation}
\begin{array}{ll}
(\mathcal{P}\text{1-1-H1})&\\
\min_{\left\{{x}_{l, i,j}\right\}} & \bar{T} = \frac{1}{{L}} \sum_{{l}=1}^{{L}} \sum_{{i}=1}^{{V}} \pi_{l, i} \sum_{{j}=1}^4 T_{l, i}^{\langle j \rangle}
 {x}_{l, {i}, {j}} \\
\mathrm{s. t.} & \sum_{{i}=1}^{{V}} \sum_{{j}=1}^4 {C}_{l, {i}, {j}}^{(\mathrm{m})} {x}_{1, {i}, {j}} \leq \delta Q_{l,i}^{2 \mathrm{D}}, \ \forall {l \in \mathcal{L}}, \\
& \sum_{{i}=1}^{{V}} \sum_{{j}=1}^4 {C}_{l, {i}, {j}}^{(\mathrm{p})} {x}_{1, {i}, {j}} \leq \bar{P}_l,  \forall {l \in \mathcal{L}}, \\
& \sum_{{j}=1}^4 {x}_{l, {i}, {j}} = 1, \ \forall {l \in \mathcal{L}}, {i\in \mathcal{V}}, \\
& {x}_{l, {i}, {j}} \in \{0,1\},  \forall {l\in \mathcal{L}}, {i\in \mathcal{V}}, {j\in \{1,2,3,4\}}.
\end{array}
\end{equation}
\end{problem1-1-H1}
This is a standard Multiple-Choice Multi-Dimensional Knapsack Problem (MMKP): each $(l, i)$ needs to choose one option out of four (multiple-choice); there are two resource dimensions, memory and power (multi-dimensional).

Let path 4 (remote 3D transmission) be the baseline, and define the baseline delay for field \( i \) as \( {T}_{l, {i}}^{\langle 4 \rangle}
\). The ``delay gain'' (reduction) for path $j$ is defined as:
\begin{equation}
g_{l, i, j} \triangleq 
\begin{cases} 
\pi_{l, i} \left( T_{l, i}^{\langle 4 \rangle}
 - T_{l, i}^{\langle 1 \rangle}
 \right), & j=1, \\
\pi_{l, i} \left( T_{l, i}^{\langle 4 \rangle}
 - T_{l, i}^{\langle 2 \rangle}
 \right), & j=2, \\
\pi_{l, i} \left( T_{l, i}^{\langle 4 \rangle}
 - T_{l, i}^{\langle 3 \rangle}
 \right), & j=3, \\
0, & j=4.
\end{cases}
\end{equation}
Thus, the total average delay can be written as:
$
\bar{T} = \frac{1}{L} \sum_{l, i} \pi_{l, i} T_{l, i}^{\langle 4 \rangle}
 - \frac{1}{L} \sum_{l, i} \sum_{j=1}^4 g_{l, i, j} x_{l, i, j}
$.
The first term is a constant that does not depend on the decision \( x_{l, i, j} \); therefore, minimizing $\bar{T}$ is equivalent to maximizing the total delay gain:
$
\max_{\left\{ x_{l, i, j} \right\}} \sum_{l, i} \sum_{j=1}^4 g_{l, i, j} x_{l, i, j}
$.

Next, we introduce $\mathcal{P}$1-1-H2.
\newtheorem*{problem1-1-H2}{\textbf{Problem 1-1-H2} \textnormal{(binary relaxation \& IQP problem)}} 
\begin{problem1-1-H2}
\begin{equation}
\begin{aligned}
(\mathcal{P}\text{1-1-H2}) & \min _{\left\{x_{l, i, j}\right\}} \sum_{l, i} \sum_{j=1}^4  -g_{l, i, j} x_{l, i, j} \\
 \mathrm{s. t.} \quad &\sum_j x_{l, i, j} = 1, \forall {l\in \mathcal{L}}, {i\in \mathcal{V}}, {j\in \{1,2,3,4\}}\\
& 0 \leq x_{l, i, j} \leq 1, \forall {l\in \mathcal{L}}, {i\in \mathcal{V}}, {j\in \{1,2,3,4\}} \\
& \sum_{l, i, j} x_{l, i, j} \left( 1 - x_{l, i, j} \right) \leq 0
\end{aligned}
\end{equation}
\end{problem1-1-H2}

Here, we  relax the binary decision variables to the continuous domain, i.e., $x_{l,i,j} \in [0,1],  \forall l,i,j$. Under this relaxation, the original binary constraint can be equivalently enforced through a quadratic inequality. Specifically, the integrality condition $x_{l,i,j} \in \{0,1\}$ is equivalent to
$
x_{l,i,j}(1-x_{l,i,j}) = 0,
$
which can be further rewritten in the aggregated form
$
\sum_{l,i,j} x_{l,i,j}\left(1-x_{l,i,j}\right) \le 0.
$
This reformulation converts the original binary constraint into the combination of a linear constraint and a concave quadratic constraint (since $x_{l,i,j}(1-x_{l,i,j})$ is concave), resulting in an integer quadratic programming (IQP) structure that is amenable to DC  techniques.

\begin{algorithm}[t]
\caption{Heuristic for $\mathcal{P}$1-1-H1 (MMKP)}
\label{alg:MMKP_heuristic}
\begin{algorithmic}[1]
\REQUIRE $\{\pi_{l,i}\}$, $\{T_{l,i}^{\langle j \rangle}
\}$, resource costs $\{C_{l, i, j}^{(\mathrm{m})},C_{l, i, j}^{(\mathrm{p})}\}$, memory budget $\delta$, power budgets $\{\bar P_l\}$, avoid division-by-zero constant $\epsilon$, memory-cost weighting coefficient $\alpha_1$, power-cost weighting coefficient $\alpha_2$
\ENSURE Feasible binary solution $\{x_{l,i,j}\}$
\STATE  For all $(l,i)$ set $x_{l,i,4}\gets1$ (remote 3D transmission baseline); others 0.  Record initial resource usage 
${\rm mem}_l$ and ${\rm pow}_l$.

\FOR{each $(l,i)$ and $j\in\{1,2,3\}$}
    \STATE Delay gain: $g_{l,i,j} \gets \pi_{l,i}(T_{l,i}^{\langle 4 \rangle}
 - T_{l,i}^{\langle j \rangle}
)$
    \STATE Score: $\rho_{l,i,j} \gets g_{l,i,j}/(\epsilon + \alpha_1 C_{l, i, j}^{(\mathrm{m})} + \alpha_2 C_{l, i, j}^{(\mathrm{p})})$
\ENDFOR
\STATE Sort all candidates $(l,i,j)$ in descending order of $\rho_{l,i,j}$.

\FOR{each sorted $(l,i,j)$}
    \STATE Let $j'$ be the current path of $(l,i)$; compute tentative resource usage
    $
    \tilde {\rm mem}_l = {\rm mem}_l - C_{l, i, j'}^{(\mathrm{m})}+ C_{l, i, j}^{(\mathrm{m})},
    \tilde {\rm pow}_l = {\rm pow}_l - C_{l, i, j'}^{(\mathrm{p})}+ C_{l, i, j}^{(\mathrm{p})}.
    $
    \IF{$\tilde {\rm mem}_l \le \delta$ and $\tilde {\rm pow}_l \le \bar P_l$}
        \STATE $x_{l,i,j'}\gets0, x_{l,i,j}\gets1$; update ${\rm mem}_l,{\rm pow}_l$.
    \ENDIF
\ENDFOR

\RETURN $\{x_{l,i,j}\}$
\end{algorithmic}
\end{algorithm}

\begin{algorithm}[t]
\caption{Binary relax and IQP for $\mathcal{P}$1-1-H2}
\label{alg:IQP_relaxation}
\begin{algorithmic}[1]
\REQUIRE Gains $\{g_{l,i,j}\}$, costs $\{C_{l, i, j}^{(\mathrm{m})},C_{l, i, j}^{(\mathrm{p})}\}$, budgets $\delta Q_{l,i}^{2\mathrm{D}}$, $\{\bar P_l\}$, penalty $\lambda>0$
\ENSURE Relaxed solution $\{x_{l,i,j}\}\in[0,1]$
\STATE Use Algorithm~\ref{alg:MMKP_heuristic} to obtain a feasible binary point $\{x_{l,i,j}^{(0)}\}$.
\STATE Relax $x_{l,i,j}\in\{0,1\}$ to $0\le x_{l,i,j}\le1$, $\sum_{j}x_{l,i,j}=1$, and keep all memory/power constraints.
\STATE Consider the penalized IQP objective
$
f(x)=\sum_{l,i,j}\Big(-g_{l,i,j}x_{l,i,j}-\lambda x_{l,i,j}(x_{l,i,j}-1)\Big),
$
whose concave part is $-\lambda x_{l,i,j}(x_{l,i,j}-1)$. Linearize it at $x^{(0)}$:
$
-\lambda x_{l,i,j}(x_{l,i,j}-1)= 
-\lambda\Big(x_{l,i,j}^{(0)}(x_{l,i,j}^{(0)}-1)
+(2x_{l,i,j}^{(0)}-1)(x_{l,i,j}-x_{l,i,j}^{(0)})\Big).
$
\STATE Drop constants independent of $x$ and solve the following convex QP:
$
\min_{\{x_{l,i,j}\}} 
\sum_{l,i,j}\Big(-g_{l,i,j}x_{l,i,j}
-\lambda(2x_{l,i,j}^{(0)}-1)x_{l,i,j}\Big)
$
subject to
$
\sum_{j}x_{l,i,j}=1,
0\le x_{l,i,j}\le1,
\sum_{i,j}C_{l, i, j}^{(\mathrm{m})}x_{l,i,j}\le\delta Q_{l,i}^{2\mathrm{D}},\;
\sum_{i,j}C_{l, i, j}^{(\mathrm{p})}x_{l,i,j}\le\bar P_l,\ \forall l.
$
\STATE Denote the optimizer by $\{x_{l,i,j}\}$ and use it as the relaxed H2 solution (and, if needed, as initialization for Algorithm~\ref{alg:CCP}).
\RETURN $\{x_{l,i,j}\}$
\end{algorithmic}
\end{algorithm}

\begin{algorithm}[t]
\caption{DCA/CCCP algorithm for $\mathcal{P}$1-1-H3}
\label{alg:CCP}
\begin{algorithmic}[1]
\REQUIRE Gains $\{g_{l,i,j}\}$, penalty $\lambda>0$, tolerance $\varepsilon>0$, max iterations $T_{\max}$
\ENSURE Approximate binary solution $\{x_{l,i,j}\}$
\STATE Obtain an initial feasible point $\{x_{l,i,j}^{(0)}\}$ (e.g., from Algorithm~\ref{alg:MMKP_heuristic}); set $t \gets 0$.
\WHILE{$t < T_{\text{max}}$}
    \STATE For all $(l,i,j)$, compute the gradient of the concave term at $x^{(t)}$:
    $
    \nabla_{x_{l,i,j}} \big[-\lambda x_{l,i,j}(x_{l,i,j}-1)\big]\Big|_{x^{(t)}} 
    = -\lambda\big(2x_{l,i,j}^{(t)}-1\big).
    $
     Build the convex surrogate objective by linearizing the concave part:
    $
    \tilde f^{(t)}(x) 
    = \sum_{l,i,j}\Big(-g_{l,i,j}x_{l,i,j} 
    -\lambda(2x_{l,i,j}^{(t)}-1)\,x_{l,i,j}\Big),
    $
    where constant terms w.r.t.\ $x$ are dropped.
    \STATE Solve the following convex LP/QP subproblem:
    $
    \min_{\{x_{l,i,j}\}}~\tilde f^{(t)}(x)
    $
    $
    \text{s.t. } \sum_{j} x_{l,i,j} = 1, 0 \le x_{l,i,j} \le 1, 
    \text{and memory/power constraints for all } l,i,j,
    $
    and denote the solution by $\{x_{l,i,j}^{(t+1)}\}$.
    \IF{$\max_{l,i,j}\big|x_{l,i,j}^{(t+1)} - x_{l,i,j}^{(t)}\big| \le \varepsilon$}
        \STATE \textbf{break}
    \ENDIF
    \STATE $t \gets t + 1$
\ENDWHILE
\STATE {For each $(l,i)$}, (rounding)$j^* \gets \arg\max_{j} x_{l,i,j}^{(t)}$; set $x_{l,i,j^\star} \gets 1$, $x_{l,i,j} \gets 0$ for all $j\neq j^*$
\RETURN $\{x_{l,i,j}\}$
\end{algorithmic}
\end{algorithm}

Now, we can add a penalty term $\lambda x_{l,i,j}(1-x_{l,i,j})$.

\newtheorem*{problem1-1-H3}{\textbf{Problem 1-1-H3} \textnormal{(penalty term \& CCP problem)}}
\begin{problem1-1-H3}
\begin{equation}
\begin{aligned}
(\mathcal{P}\text{1-1-H3})& \min _{\left\{x_{l, i, j}\right\}} \sum_{l, i} \sum_{j=1}^4 \left( -g_{l, i, j} x_{l, i, j} - \lambda  x_{l, i, j} \left( x_{l, i, j} - 1 \right) \right) \\
 \mathrm{s. t.} \quad &\sum_j ^4 x_{l, i, j} = 1, \forall {l\in \mathcal{L}}, {i\in \mathcal{V}} \\
& 0 \leq x_{l, i, j} \leq 1, \forall {l\in \mathcal{L}}, {i\in \mathcal{V}}, {j\in \{1,2,3,4\}}
\end{aligned}
\end{equation}
where $\lambda > 0 $ is the penalty parameter. 
\end{problem1-1-H3}

The objective function consists of a linear term $-g_{l, i, j} x_{l, i, j}$ and a concave quadratic penalty term $-\lambda x_{l, i, j}\left(x_{l, i, j}-1\right)$; hence, it has a difference-of-convex (DC) structure and can be solved using difference-of-convex algorithm (DCA) or convex-concave procedure (CCCP).

A feasible binary solution is first obtained via a greedy delay-gain–to–resource-cost heuristic (Algorithm~\ref{alg:MMKP_heuristic}), followed by binary relaxation with one-shot DCA/CCCP refinement (Algorithm~\ref{alg:IQP_relaxation}) and a full DCA/CCCP solver with rounding (Algorithm~\ref{alg:CCP}).


\begin{figure*}
\begin{equation}
(\mathcal{P}1\text{-}2):    \max_{\mathbf{a}, \boldsymbol{P}} \sum_{l=1}^L\frac{ R_l}{\omega_l} = B_g \sum_{l=1}^L \frac{\sum_{u=1}^{U} k_{l,u}}{\omega_l}   
=\max_{\mathbf{a},\boldsymbol{P}} B_g\sum_{l=1}^L 
\left(
\sum_{u=1}^U  \log_2\left(
1 - \frac{1.5 
\frac{
G^{\mathrm{Tx}} G^{\mathrm{Rx}}
\big|  \boldsymbol{H}_{l,u} \tilde{\boldsymbol{M}}_{u}\boldsymbol{P}_u  \big|^2 
}{
G^{\mathrm{Tx}} G^{\mathrm{Rx}}I_u + B_g n_0
}}{\ln\big(5\epsilon_{l,u}\big)}
\right)\right)/\omega_l,
\label{eq:p22}
\end{equation}
\noindent\rule{\textwidth}{0.4pt}
\end{figure*}

\begin{algorithm}[t]
\caption{Projected-Gradient (PG) Solver for $\mathcal{P}2$--2}
\label{alg:RHS}
\begin{algorithmic}[1]
\REQUIRE Channel $\boldsymbol H_{l,u}$, interference terms $I_u$, stepsizes $\eta_P,\eta_a$,
 stopping tolerance $\varepsilon$
\ENSURE Feeds power $\{\boldsymbol P_u\}_{u=1}^U$, holographic weights $\{a_{l,u,k}\}$
\STATE Initialize $\boldsymbol P_u^{(0)}$ such that
$\|\boldsymbol P_u^{(0)}\|^2=P_u^{\mathrm{Tx}}$, $P_u^{\mathrm{Tx}}=P^{\mathrm{Tx}}/U$
and initialize $a_{l,u,k}^{(0)}$, compute $\boldsymbol M_u^{(0)}$  for all $u$. Set  $i\leftarrow 0$.
\WHILE{$\big|J^{(i+1)}-J^{(i)}\big|>\varepsilon$}
    \STATE Compute gradients
    $\nabla_{\boldsymbol P_u^{\ast}} J^{(i)}$ and
    $\nabla_{a_{l,u,k}} J^{(i)}$ using Appendix~C. {Gradient ascent update:}
    $
    \boldsymbol P_u^{(i+1)}
    \leftarrow
    \boldsymbol P_u^{(i)} + \eta_P \nabla_{\boldsymbol P_u^{\ast}} J^{(i)},
   $
    $
    a_{l,u,k}^{(i+1)}
    \leftarrow
    a_{l,u,k}^{(i)} + \eta_a \nabla_{a_{l,u,k}} J^{(i)}.
   $
    \STATE Project $\boldsymbol P_u^{(i+1)}$ onto
        $\|\boldsymbol P_u\|^2=P_u^{\mathrm{Tx}}$;
        Project $\{a_{l,u,k}^{(i+1)}\}$ onto the simplex
        $\{a_{l,u,k}\ge0,\ \sum_{l,k}a_{l,u,k}=1\}$.
   \STATE Update $\boldsymbol M_u^{(i+1)}$ for all $u$. $i\leftarrow i+1$.
\ENDWHILE
\end{algorithmic}
\end{algorithm}

Assume the prefetching-rendering strategy (\(\boldsymbol{p, r}\)) has been determined.  We introduce Problem 1-2. 

\newtheorem*{problem1-2}{\textbf{Problem 1-2} \textnormal{(RHS holographic beamforming (given prefetching-rendering strategy))}} 
\begin{problem1-2}
The physical layer objective is to optimize the holographic pattern weights $\mathbf{a}$ and feeds power allocation vector $\boldsymbol{P}$ to maximize the total rate, which is based on the SINR across multiple subbands, expressed as eq. \eqref{eq:p22}, subject to constraints (C6) and (C7).\end{problem1-2}

Here, $\omega_l$ is derived from the optimal solution of $\mathcal{P}$1-1: $\omega_l = \sum_{i=1}^V \pi_{l, i} \left( 1 - p_{l, i}^{3\mathrm{D}} \right) \Big[ r_{l, i} \left( 1 - p_{l, i}^{2\mathrm{D}} \right) Q_{l, i}^{2\mathrm{D}} 
 + \left( 1 - r_{l, i} \right) Q_{l, i}^{3\mathrm{D}} \Big]$, which essentially represents the expected amount of critical data that still needs to be pulled from the wireless link under the current prefetching-rendering strategy. The higher this value, the more rate should be allocated at the physical layer. 

$\boldsymbol{H}_{l, u} \tilde{\boldsymbol{M}}_u \boldsymbol{P}_u$ is the signal transmitted from the RHS to HMD $l$,  $ \boldsymbol{H}_{l, u} \in \mathbb{C}^{1 \times N}, \tilde{\boldsymbol{M}}_u \in \mathbb{C}^{N \times K}, \boldsymbol{P}_u \in \mathbb{C}^{K \times 1}$, then:
\begin{equation}
\boldsymbol{H}_{l,u} \tilde{\boldsymbol{M}}_u \boldsymbol{P}_u
=
\sum_{n=1}^{N} h_{l,u,n}
\left(
\sum_{m=1}^{N} \Xi_{u,n,m}
\sum_{k=1}^{K} {\boldsymbol{M}_{u}}_{m}^{k} P_{u,k}
\right),
\end{equation}
where $\Xi_{u,n,m}$ denotes the $(n,m)$-th entry of $\boldsymbol{\Xi}_u$. $h_{l, u, n}=\frac{c}{4\pi f_u d_l}
\exp\!\left(-\frac{\kappa(f_u) d_l}{2}\right)$, ${\boldsymbol M_{u}}_n^k=\sqrt{\beta} m_{u,n}e^{-\alpha \|\boldsymbol{r}_n^k\|}e^{-j \boldsymbol{k}_s \boldsymbol{r}_n^{k}}$ with each $m_{u,n}
=\sum_{l=1}^{L}\sum_{k=1}^{K}
a_{l,u,k}
\frac{\operatorname{Re}\left[\Psi_{l,u}^{\text{intf}}\right]+1}{2}$.  Here, $h_{l, u, n}$, $\operatorname{Re}\left[\Psi_{l,u}^{\text{intf}}\right]$, $G^{\mathrm{Tx}} G^{\mathrm{Rx}}I_u + B_g n_0$ and $\ln\big(5\epsilon_{l,u}\big)$ are constants for HMD $l$ given $(\boldsymbol{p}_l^{3\mathrm{D}}, \boldsymbol{p}_l^{2\mathrm{D}}, \mathbf{r}_l)$.

$\mathcal{P}$1-2 is non-convex. A PG method (Algorithm \ref{alg:RHS}), with the gradients computation $\nabla_{\boldsymbol P_u} J$ and $\nabla_{a_{l,u,k}} J$ provided in Appendix C, is applied.

\section{Simulation}
\subsection{Simulation for the Homogenous Setting ($\mathcal{P}$1')}

\begin{figure*}[htbp]
    \centering
    \begin{subfigure}[t]{0.32\textwidth}
        \includegraphics[width=\textwidth]{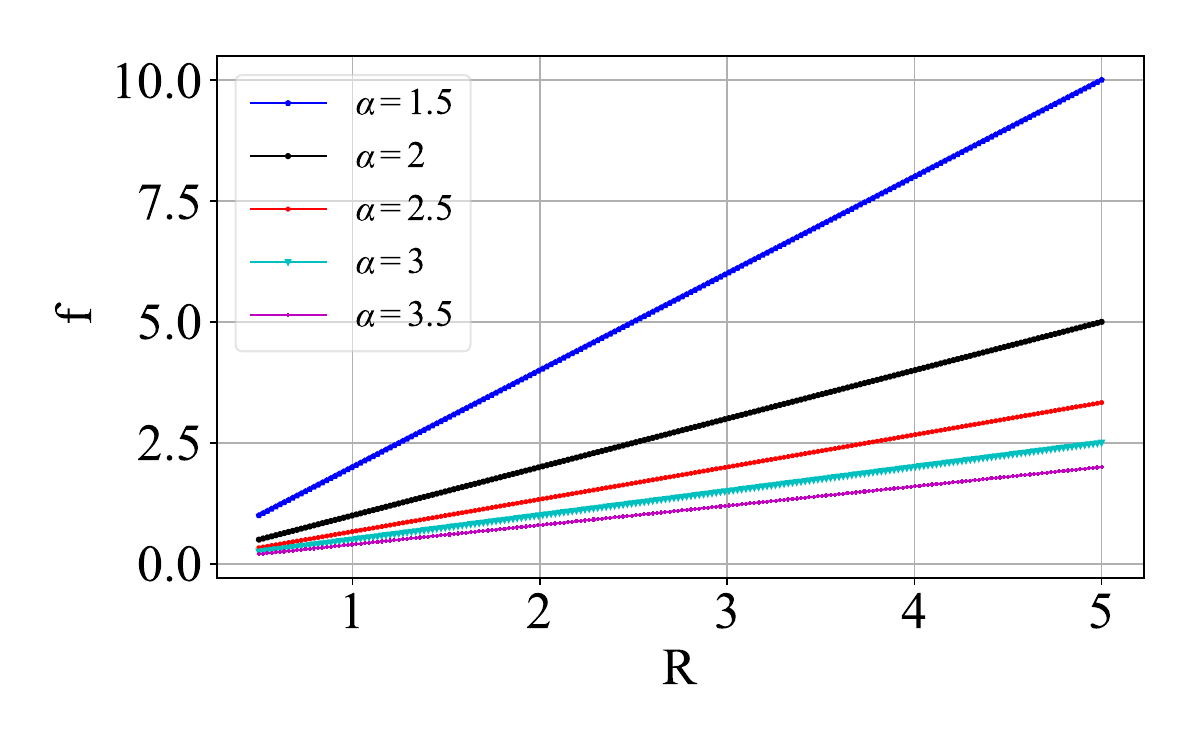}
        \caption{Zone boundary curves with different $\alpha$.}
        \label{fig:homo1}
    \end{subfigure}   
\hfill
    \begin{subfigure}[t]{0.32\textwidth}
        \includegraphics[width=\textwidth]{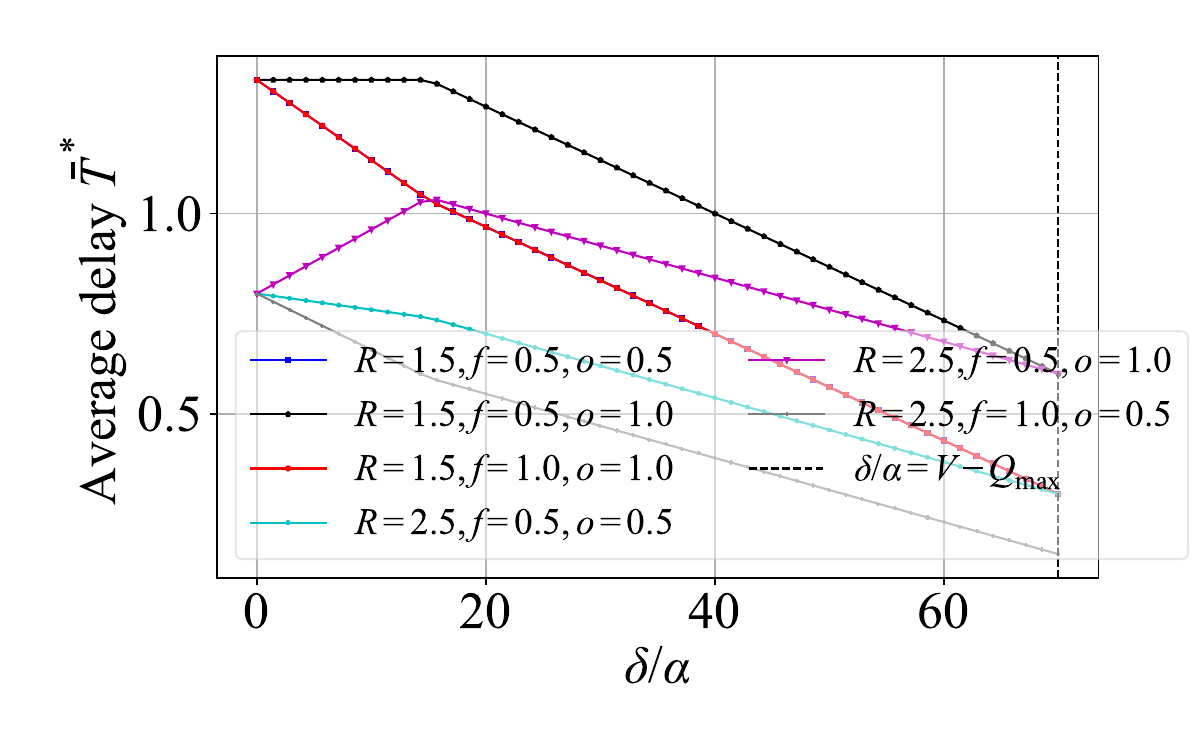}
        \caption{Zone 1 (remote 3D transmission priority zone), six $(R,f,o)$ combinations.}
        \label{fig:homo2}
    \end{subfigure}   
\hfill
    \begin{subfigure}[t]{0.32\textwidth}
        \includegraphics[width=\textwidth]{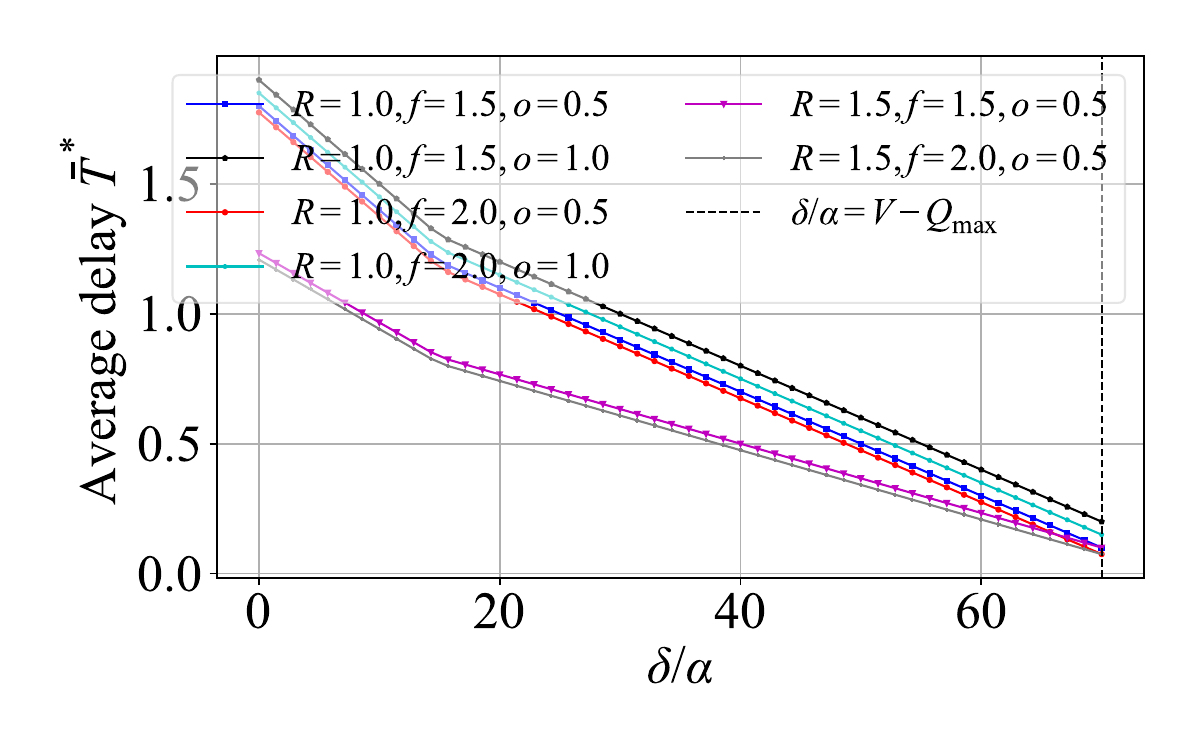}
        \caption{Zone 2 (on-device 2D rendering priority zone), six $(R,f,o)$ combinations.}
        \label{fig:homo3}
    \end{subfigure}   
\hfill
    \begin{subfigure}[t]{0.32\textwidth}
        \includegraphics[width=\textwidth]{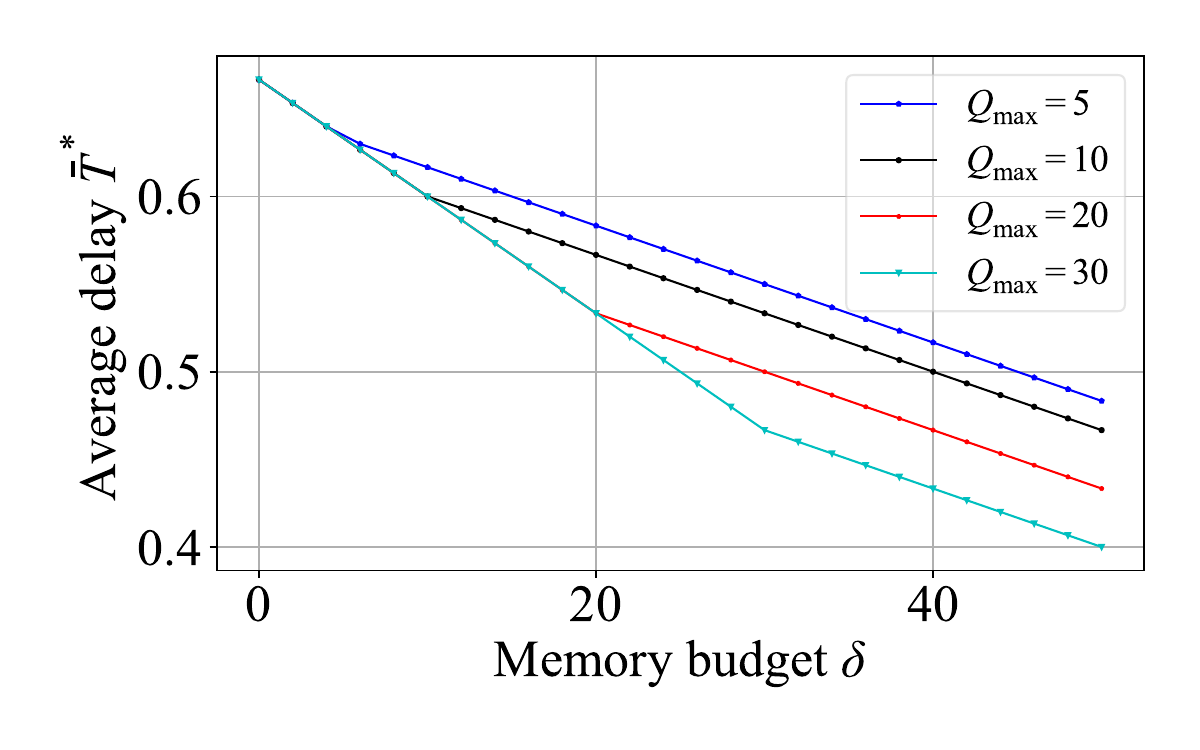}
        \caption{Zone 2 (on-device 2D rendering priority zone): $\bar T^*$ vs $\delta$.}
        \label{fig:homo4}
    \end{subfigure}   
\hfill
    \begin{subfigure}[t]{0.32\textwidth}
        \includegraphics[width=\textwidth]{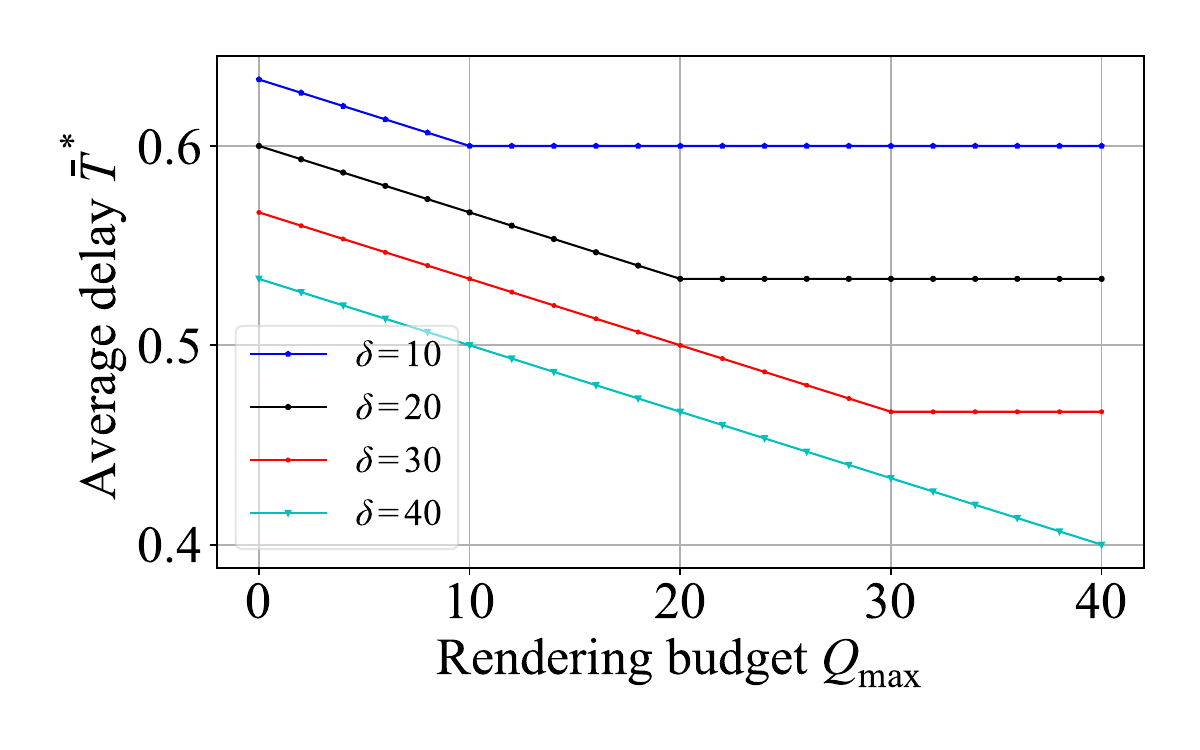}
        \caption{Zone 2 (on-device 2D rendering priority zone): $\bar T^*$ vs $Q_{\mathrm{max}}$.}
        \label{fig:homo5}
    \end{subfigure}   
\hfill
    \begin{subfigure}[t]{0.32\textwidth}
        \includegraphics[width=\textwidth]{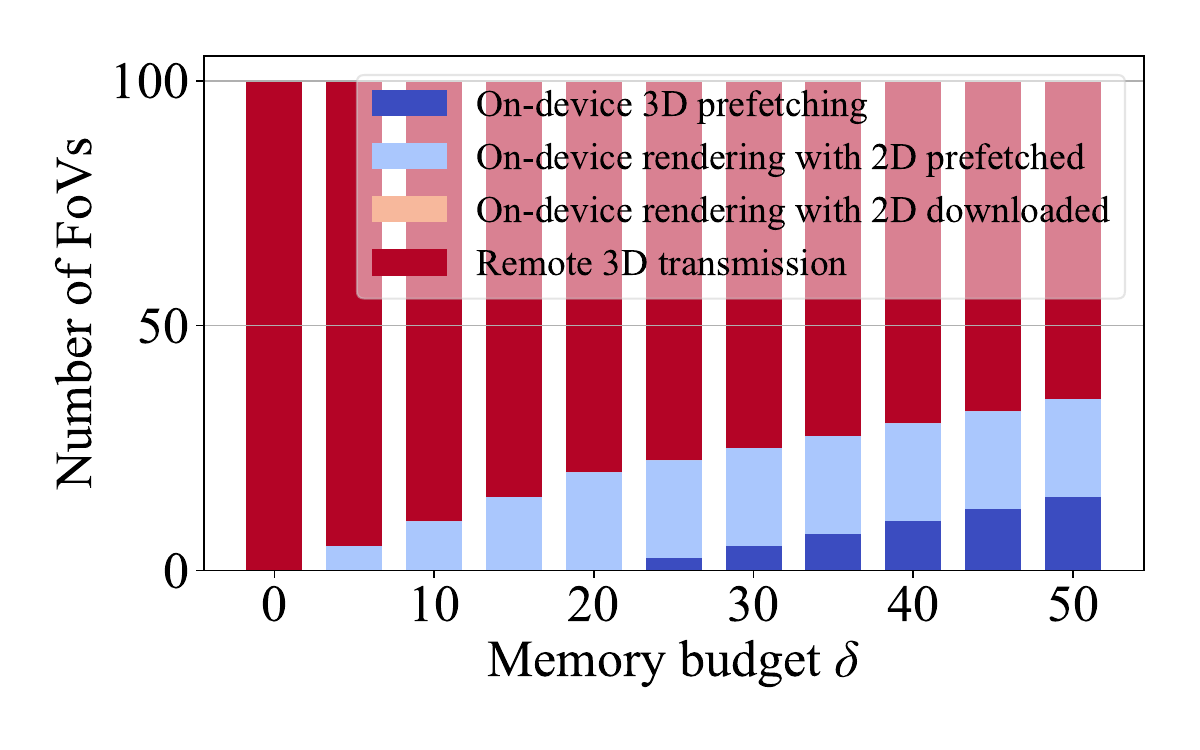}
        \caption{Zone 1 (remote 3D transmission priority zone), FoVs vs $\delta$.}
        \label{fig:homo6}
    \end{subfigure}   
    \caption{Simulation results for prefetching and rendering offloading in homogenous setting($\mathcal{P}$1').}
    \label{fig:homo}
\end{figure*}

\subsubsection{Setup}
We adopt a standard FoV modeling approach following~\cite{sun2019communications}: a $360^\circ$ panoramic video is projected into a 2D equirectangular format, partitioned into independently encoded H.264 tiles, and temporally segmented into 1-second chunks. To isolate the structural behavior of prefetching and rendering, we assume homogeneous FoVs with normalized size $Q^{2\mathrm{D}}=1$ (about 3 Mbits), while 3D FoVs satisfy $Q^{3\mathrm{D}}=\alpha Q^{2\mathrm{D}}$ with $\alpha=2$ (about 6 Mbits) owing to stereoscopic rendering. Unless specified, default $R=1$ Gbps. A total of $V=100$ FoVs is used to represent diverse but tractable view directions. The maximum tolerable latency is $\tau=20$ ms \cite{du2025qoe}, the local CPU uses switched-capacitance parameter $\zeta=10^{-27}$ J·s·cycle$^{-3}$, rendering cost $o=1$ cycles/bit, and CPU frequency $f\in[1.5,5]$ GHz \cite{deng2022fov} under power budget $\bar P=5$ W.

\subsubsection{Results and Discussion}
Fig.~\ref{fig:homo1} shows the $(R,f)$-plane boundary separating the ``remote 3D transmission priority'' and ``on-device rendering priority'' zones. The boundary $o/f = (\alpha-1)/R$ steepens with increasing $\alpha$, indicating that heavier 3D data requires higher CPU frequency or stronger wireless links for ``remote 3D transmission'' to remain preferable. 

Fig.~\ref{fig:homo2} plots the average delay $\bar{T}^*$ as the normalized memory size $\delta/\alpha$ grows in the ``remote 3D transmission priority'' zone. All curves decay since larger memory reduces dependence on ``remote 3D transmission''. The decay rate varies with $(R,f,o)$: larger $R$ yields milder reduction, while higher $o$ or lower $f$ yields steeper curves. The vertical dashed line at $\delta/\alpha = V - Q_{\mathrm{max}}$ marks the point where all FoVs can be served locally, and the kink near this boundary matches the structure in Proposition \ref{theorem1}.

Fig.~\ref{fig:homo3} examines the case where ``on-device rendering'' is strictly superior to ``remote 3D transmission''. Compared with Fig.~\ref{fig:homo2}, the decay is sharper—especially for $\delta/\alpha \le 20$—reflecting the high marginal benefit of prefetching when ``local rendering'' dominates. CPU capability and content complexity strongly shape the curve ordering, with higher $f$ or smaller $o$ greatly reducing delay. The vertical dashed line again aligns with the boundary from Proposition \ref{theorem2}.

Fig.~\ref{fig:homo4} shows how $\bar{T}^*$ decreases with memory size $\delta$ in the ``on-device rendering priority'' zone. Prefetching remains the dominant performance enhancer: more prefetched 2D FoVs increase reliance on the fast ``on-device rendering with 2D prefetched'' path, reducing ``remote 3D transmission''. While larger $Q_{\mathrm{max}}$ improves baseline delay, the marginal prefetching benefit is more pronounced when $Q_{\mathrm{max}}$ is small, underscoring their complementarity.

Fig.~\ref{fig:homo5} illustrates the delay reduction as $Q_{\mathrm{max}}$ increases under different $\delta$. As $Q_{\mathrm{max}}$ increases, delay drops most when memory is scarce (small $\delta$), since stronger rendering cuts dependence on ``remote 3D transmission''; when memory is abundant (large $\delta$), the curves flatten and performance becomes prefetching-limited.

Fig.~\ref{fig:homo6} depicts how FoVs are allocated across service modes as memory budget $\delta$ increases in the ``remote 3D transmission zone''. As $\delta$ grows, FoVs reduce from ``remote 3D transmission'' to ``on-device 3D prefetching'' and ``rendering'': ``remote transmission'' dominates at small $\delta$, ``on-device rendering'' appears at moderate $\delta$, and at large $\delta$ ``remote 3D'' is largely replaced by ``on-device 3D prefetching'' and ``on-device 2D rendering'', which is exactly matching the optimal structure in Proposition \ref{theorem1}.

\begin{figure*}[htbp]
    \centering
    \begin{subfigure}[t]{0.32\textwidth}
        \includegraphics[width=\textwidth]{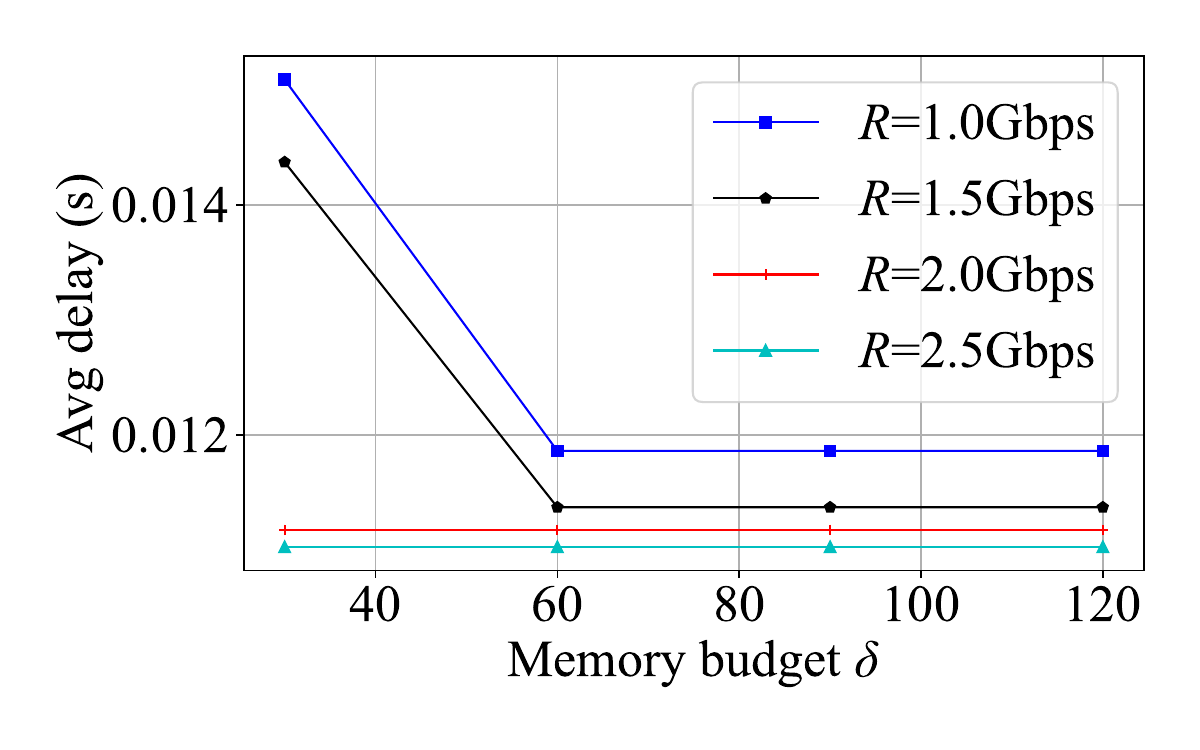}
        \caption{ $\bar T^*$ vs $\delta$.}
        \label{fig:heter1}
    \end{subfigure}   
\hfill
    \begin{subfigure}[t]{0.32\textwidth}
        \includegraphics[width=\textwidth]{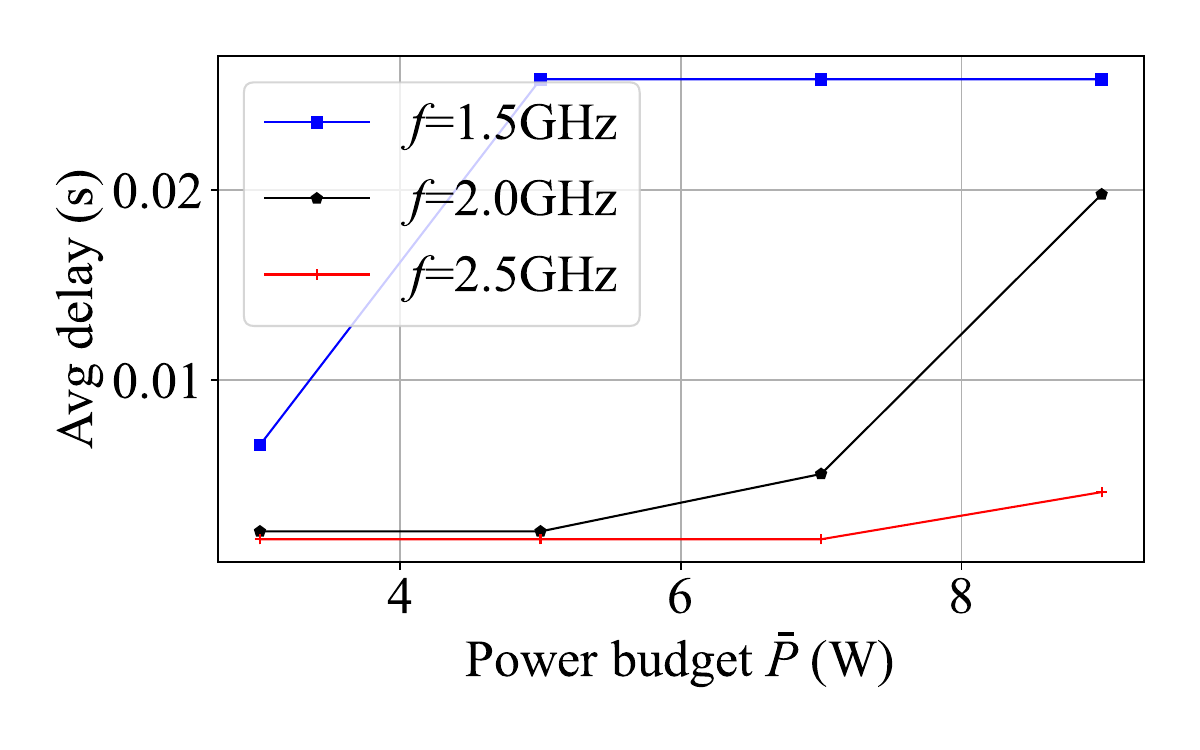}
        \caption{ $\bar T^*$ vs $\bar{P}_l$.}
        \label{fig:heter2}
    \end{subfigure}   
\hfill
    \begin{subfigure}[t]{0.32\textwidth}
        \includegraphics[width=\textwidth]{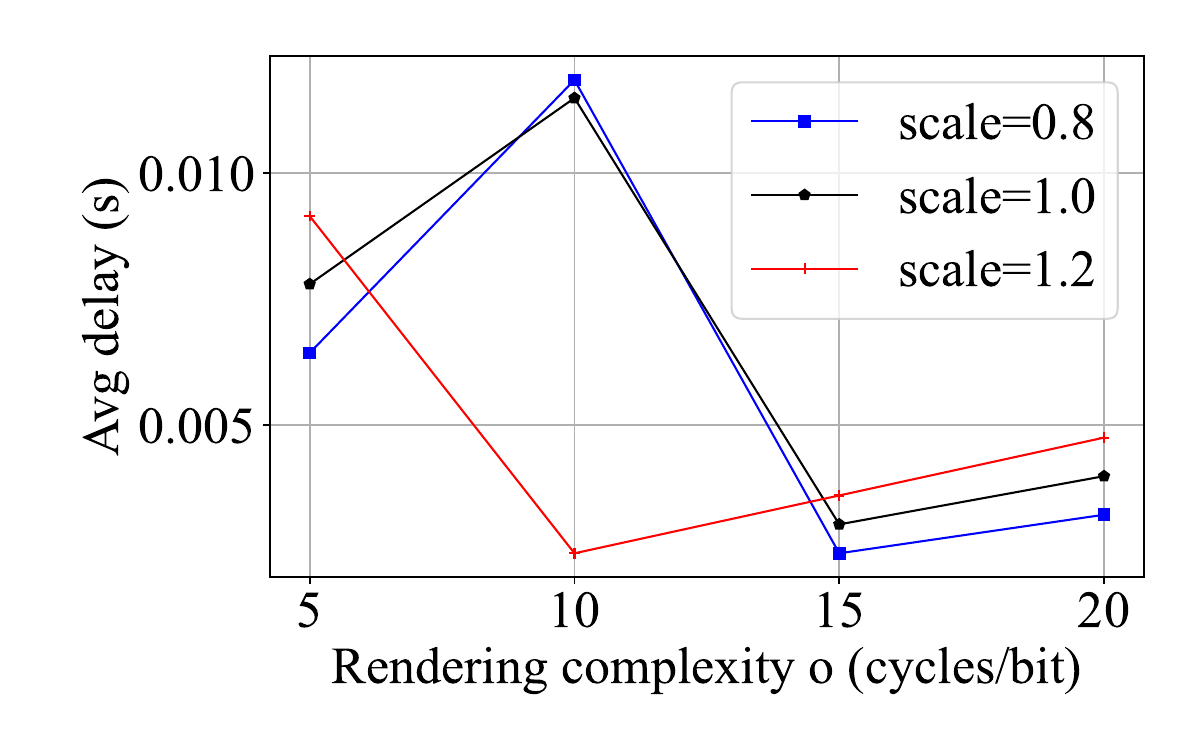}
        \caption{ $\bar T^*$ vs $o$.}
        \label{fig:heter3}
    \end{subfigure}   
\hfill
    \begin{subfigure}[t]{0.32\textwidth}
        \includegraphics[width=\textwidth]{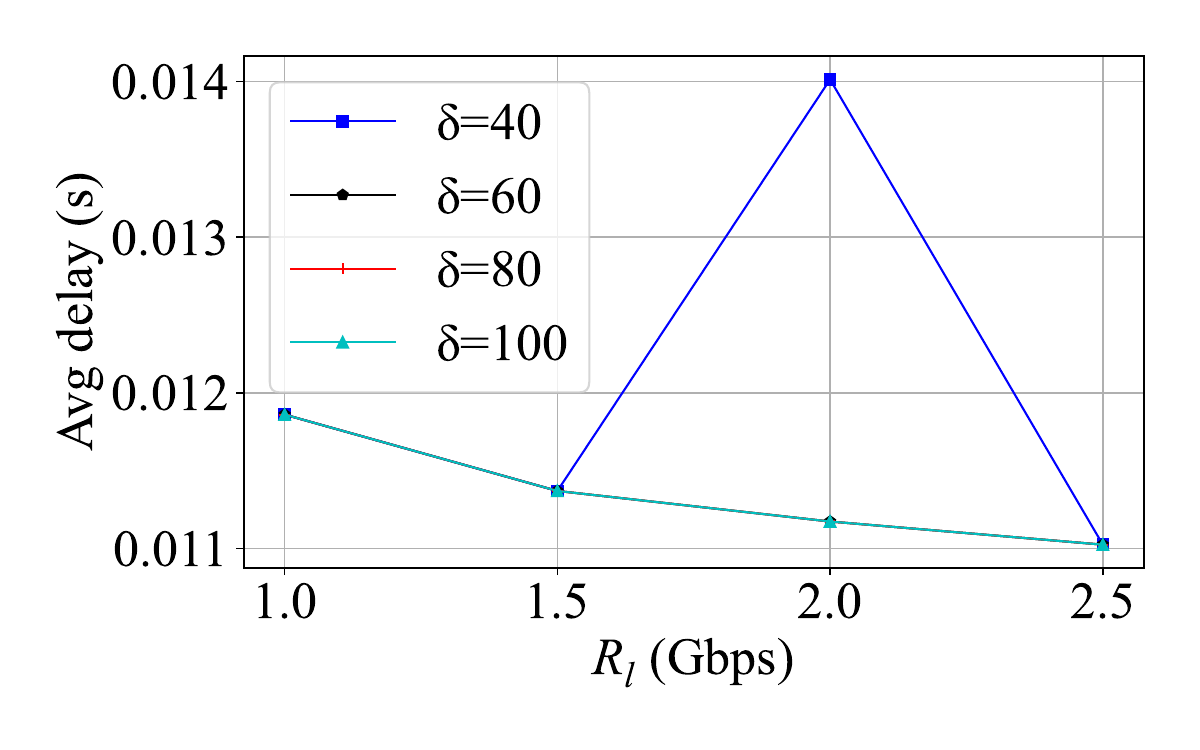}
        \caption{ $\bar T^*$ vs $R_l$.}
        \label{fig:heter4}
    \end{subfigure}   
\hfill
    \begin{subfigure}[t]{0.32\textwidth}
        \includegraphics[width=\textwidth]{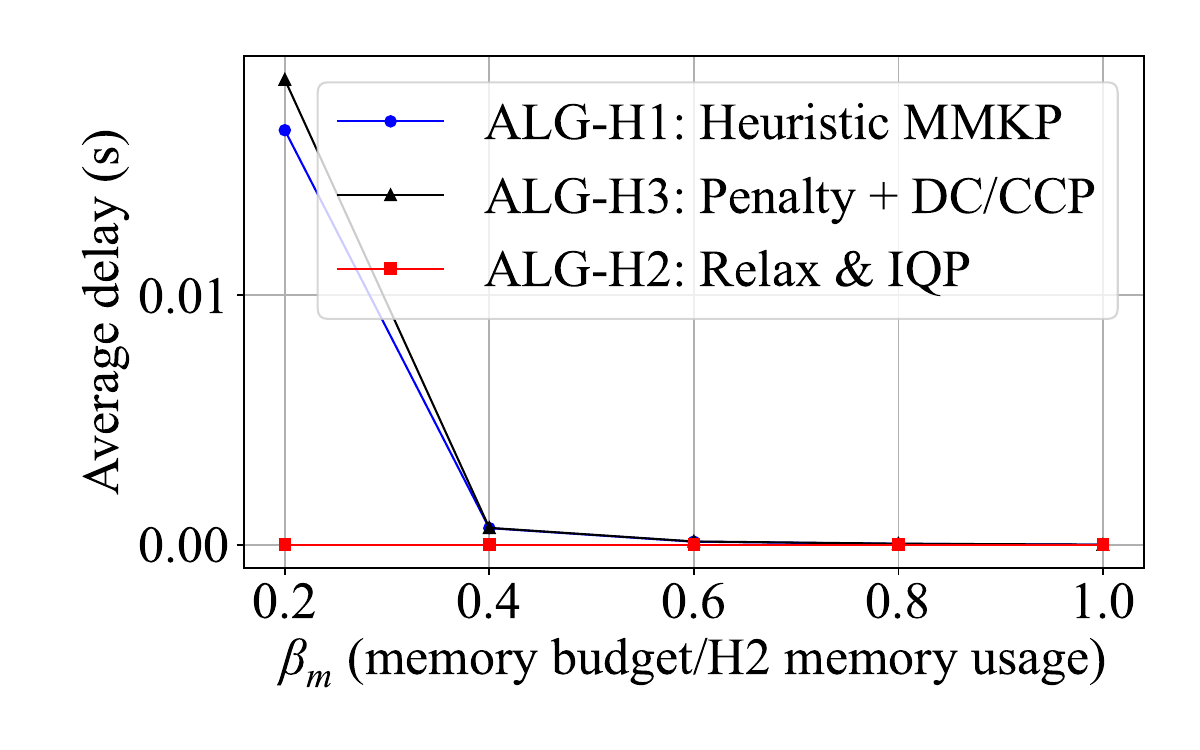}
        \caption{ $\bar T^*$ vs $\beta_m$.}
        \label{fig:heter5}
    \end{subfigure}   
\hfill
    \begin{subfigure}[t]{0.32\textwidth}
        \includegraphics[width=\textwidth]{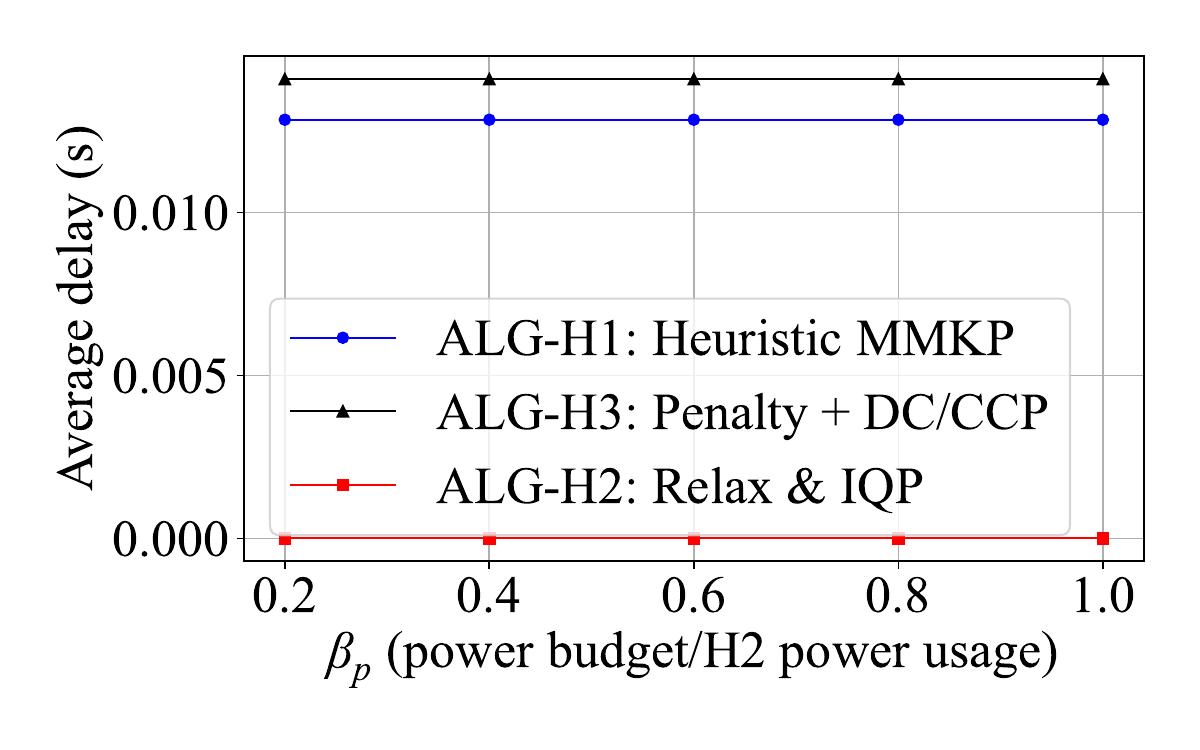}
        \caption{ $\bar T^*$ vs $\beta_p$.}
        \label{fig:heter6}
    \end{subfigure}   
    \caption{Simulation results for prefetching and rendering offloading in heterogeneous setting ($\mathcal{P}$1-1).}
    \label{fig:heter}
\end{figure*}

\subsection{Simulation of Prefetching and Rendering Offloading for the Heterogeneous Setting ($\mathcal{P}$1-1)}
\subsubsection{Setup}
We consider a multi-user VR system with \(L=4\) HMDs and \(V=100\) FoVs per HMD. For each user \(l\) and FoV \(i\), heterogeneous FoV sizes follow  
$Q_{l,i}^{2\mathrm{D}} \sim \mathcal{U}[1,4]~\text{Mbit}, Q_{l,i}^{3\mathrm{D}} \sim \mathcal{U}[2,8]~\text{Mbit},$
and the rendering complexity is drawn as  
$o \sim \mathcal{U}[5,15]~\text{cycles/bit}.$
FoV request probabilities follow a Zipf distribution with exponent \(\gamma=1.2\). 
The delay threshold is \(\tau = 20\) ms. Local CPU frequencies are heterogeneous, \(f_l \in \{1.5,2.0,2.5,3\}\) GHz, with switched capacitance \(\zeta = 10^{-27}\)  J·s·cycle$^{-3}$. The long-term power budgets \(\bar P_l\) vary across experiments. Downlink rates are constant during each run and take values  
$R \in \{1.0,1.5,2.0,2.5\}~\text{Gbps},$
allowing evaluation under heterogeneous channel conditions.

\begin{figure*}[htbp]
    \centering
    \begin{subfigure}[t]{0.32\textwidth}
        \includegraphics[width=\textwidth]{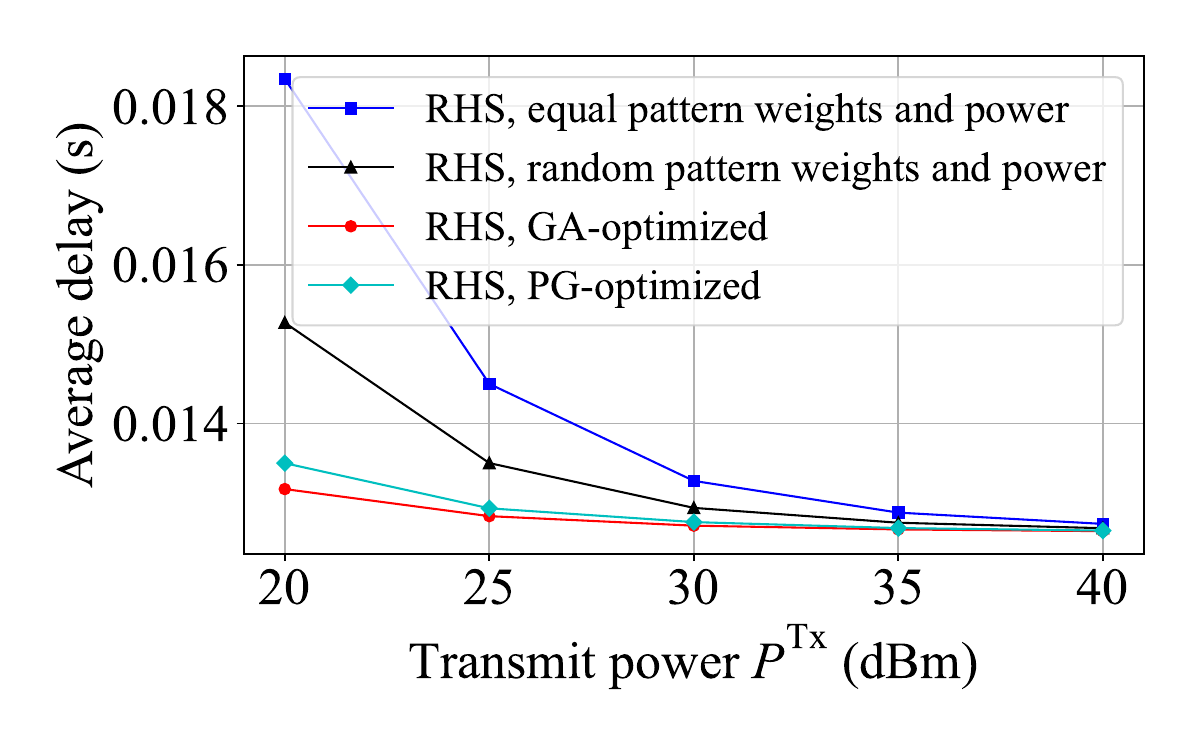}
        \caption{$\bar T^*$ vs $P_\mathrm{Tx}$.}
        \label{fig:rhs1}
    \end{subfigure}   
\hfill
    \begin{subfigure}[t]{0.32\textwidth}
        \includegraphics[width=\textwidth]{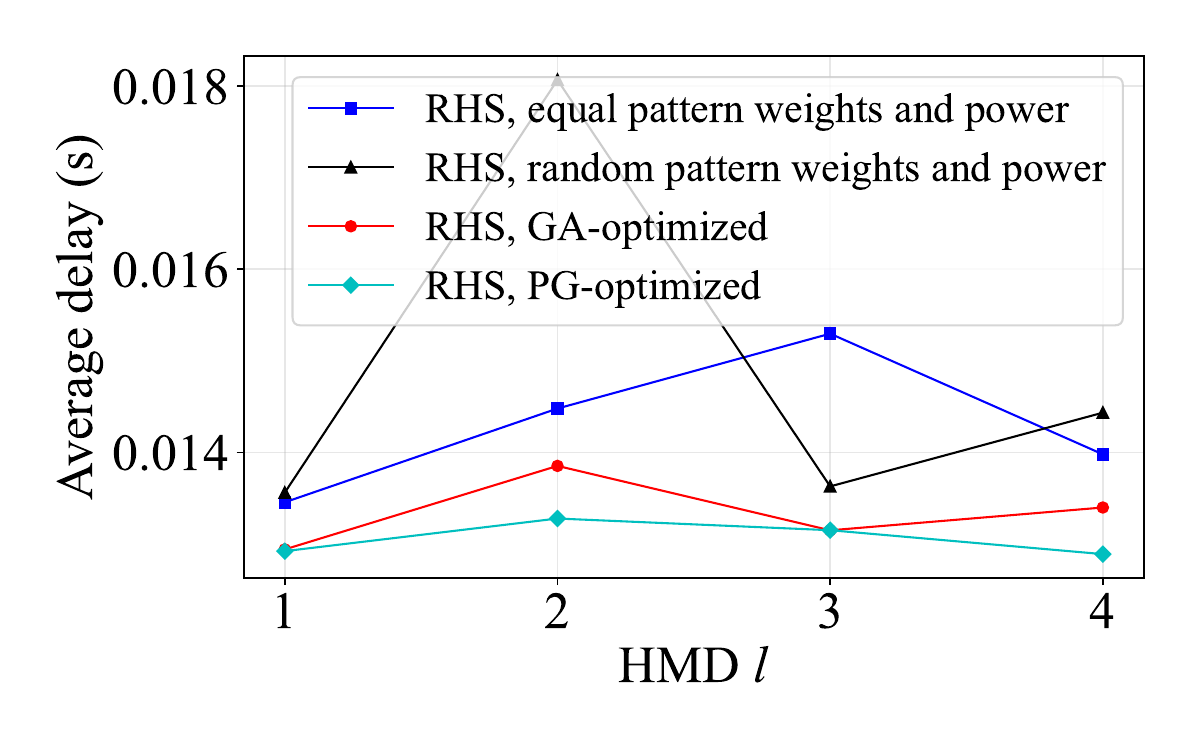}
        \caption{Comparison of HMDs.}
        \label{fig:rhs2}
    \end{subfigure}   
\hfill
    \begin{subfigure}[t]{0.32\textwidth}
        \includegraphics[width=\textwidth]{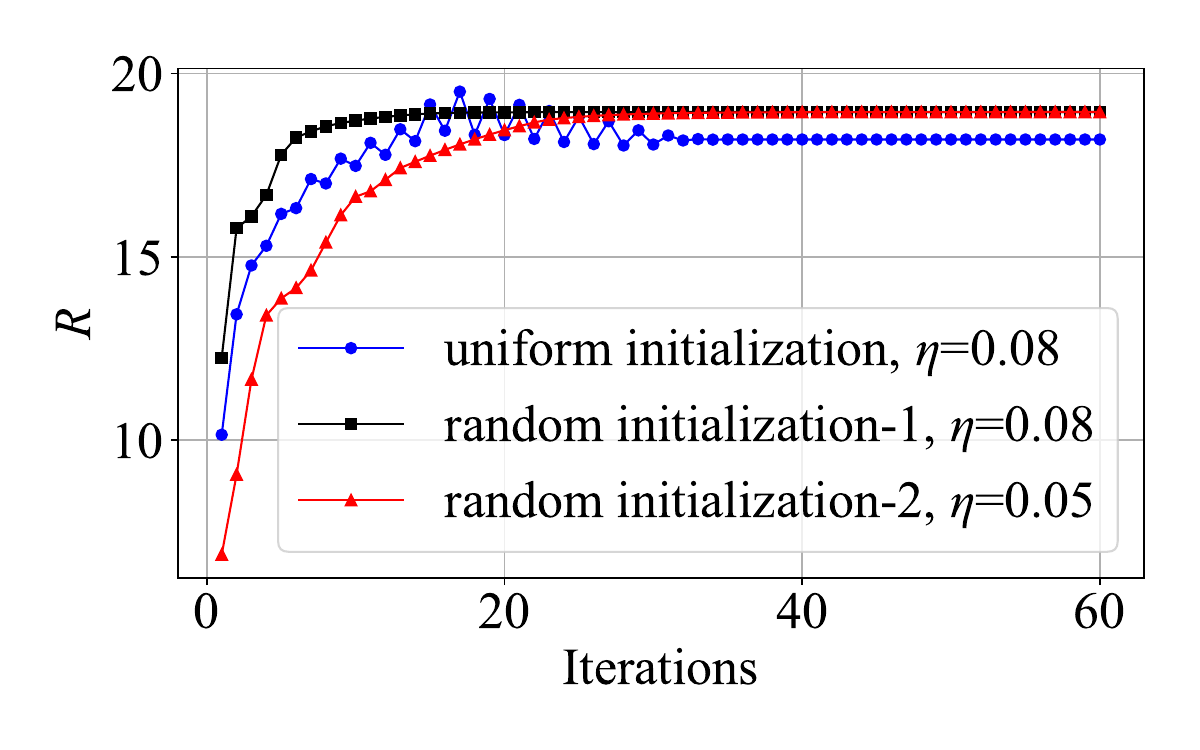} 
        \caption{Convergence of PG-optimizer.}
        \label{fig:rhs3}
    \end{subfigure}     
    \caption{Simulation results for RHS holographic pattern weights adjustment and feeds power allocation ($\mathcal{P}$1-2).}
    \label{fig:rhs}
\end{figure*}

\subsubsection{Results and Discussion}
Fig.~\ref{fig:heter1}–\ref{fig:heter6} jointly illustrate the impact of key system parameters and the effectiveness of the three algorithms under heterogeneous FoVs. Figs.~\ref{fig:heter1}–\ref{fig:heter4} show that enlarging the memory budget $\delta$, the power budget $\bar P_l$, the rendering capability (larger $f_l$ or smaller $o$), or the wireless rate $R_l$ all reduce the optimal average delay $\bar T^*$, with the gains being most pronounced in the corresponding bottleneck regimes (small $R_l$, small $\bar P_l$, large $o$, or small $\delta$). Figs.~\ref{fig:heter5} and~\ref{fig:heter6} further compare ALG-H1, ALG-H3 and ALG-H2 as the memory and power budget ratios $\beta_m$ and $\beta_p$ vary, where $\beta_m$ (resp. $\beta_p$) denotes the available memory (resp. power) normalized by the usage of the relaxed lower-bound solution ALG-H2. ALG-H2 yields a constant performance lower bound, while ALG-H1 exhibits the largest delay, especially when $\beta_m$ or $\beta_p$ is small. ALG-H3 consistently outperforms ALG-H1 and rapidly approaches the ALG-H2 bound as $\beta_m$ or $\beta_p$ increases, confirming that the proposed relaxation–and–DC/CCP refinement effectively exploits additional cache and power resources to approximate the ideal  solution under  real constraints.

\subsection{Simulation of RHS Beamforming for the Heterogeneous Setting ($\mathcal{P}$1-2)}
\subsubsection{Setup}
We consider a single THz BS equipped with an RHS that has $N=N_xN_y=256$ elements with $N_x=N_y=16$ placed on a $\sqrt{N}\times\sqrt{N}$ UPA with $K=3$ feed sources.  $Z_O=50~\Omega$, and $\mathbf{Z}_O=Z_O\mathbf{I}_N$, $\mathbf{I}_N\in\mathbb{C}^{256\times256}$.  $l_d=\lambda_c/4=0.25~\mathrm{mm}$, and
$Z_{A,u}=80\left(\frac{l_d}{\lambda_u}\right)^2-\mathrm{j} \frac{120}{\pi} \ln \left(\frac{\lambda_u}{2 \pi l_d}\right)$, $\mathbf{Z}_{A,u}=Z_{A,u}\mathbf{I}_N$.
The RHS phase center is located at $\mathbf{p}_{\mathrm{R}}=[0,0,2]^{\top} \mathrm{m}$ on the wall of an indoor room of size $6 \mathrm{~m} \times 6 \mathrm{~m} \times 3 \mathrm{~m}$. Each of the $L=4$ HMD users is placed at $\mathbf{p}_l=\left[x_l, y_l, 1.6\right]^{\top} \mathrm{m}$, where  $(x_l, y_l)$ are drawn uniformly from $[1,5] \times[1,5]$. The molecular absorption coefficient $\kappa(f_u)=3\times 10^{-3}\ \mathrm{m}^{-1}$; transmit power is $P^{\mathrm{Tx}}=30\;\mathrm{dBm}$ and is equally shared across subbands, or equivalently the SNR is swept from -10 to 20 dB by scaling $P^{\mathrm{Tx}}$. The waveform spectrum $G_v(f)$ satisfies $\int |G_v(f)|^2 df = 1$. The index set $\mathcal{Q}$ contains the dominant leakage components from adjacent subbands and is set to $\mathcal{Q}=\{1\}$, with leakage coefficient $\alpha_{v,q}=0.05$ for all $v\neq u$. Transmit and receive antenna gains are set to $G^{\mathrm{Tx}}=30\;\mathrm{dBi}$ and $G^{\mathrm{Rx}}=10\;\mathrm{dBi}$, the receiver noise spectral density is $n_0=-174 \;\mathrm{dBm} / \mathrm{Hz}$. The RHS employs structural attenuation coefficient $\alpha=0.1 \mathrm{~m}^{-1}$, radiation efficiency $\beta=0.8$, and a surface-wave vector aligned with the $x$-axis. Holographic pattern weights $a_{l, u, k}$ satisfying $\sum_{l, k} a_{l, u, k}=1$. $\mathcal{P}$1-2 is solved via a projected gradient ascent method with fixed step size $\eta_{\mathrm{P}}=\eta_a=0.05$, maximum number of iterations $N_{\text {iter }}=200$, and stopping tolerance $\varepsilon=10^{-4}$ on the relative improvement of the weighted sum rate, where after each gradient update the vectors $\{P_n\}_{n=1}^{N}$ and $\left\{a_{l, u, k}\right\}_{l=1,k=1}^{L,K}$ are projected onto their corresponding probability simplices.

\subsubsection{Results and Discussion}
Fig. \ref{fig:rhs} shows that joint optimization of RHS beamforming (holographic pattern weights optimization and feeds power allocation with genetic algorithm(GA)- and PG-based optimizer) significantly reduces the average delay compared with equal or random pattern weights and power, especially at moderate transmit powers. Across different HMDs, the optimized schemes not only achieve lower delay but also provide much more balanced performance, avoiding the severe degradation observed for some users under equal-weight designs. The convergence curves further indicate that the proposed PG-optimizer rapidly converges within a few tens of iterations and is relatively robust to different initializations and step sizes.

\section{Conclusion}
This paper presents a framework for jointly optimizing prefetching, rendering, RHS-enabled THz beamforming by adjusting holographic pattern weights, and feeds power allocation in multi-user wireless VR systems. Additionally, mutual coupling effects between adjacent elements are incorporated into the model, ensuring more accurate optimization of the beamforming process. For homogeneous FoVs, we derive closed-form structural policies that reveal two operating zones: remote-3D-transmission–priority zone and on-device-rendering–priorit zone, and show how memory, computing capacity, and content complexity determine the optimal strategy. For heterogeneous FoVs, we leverage the timescale separation between slow prefetching/rendering decisions and fast RHS re-beamforming under user mobility, leading to a three-stage solution. The DC/CCP-based approach achieves near-optimal delay under memory and power constraints. We further formulate a smooth RHS beamforming  problem and develop an efficient projected-gradient solver that consistently outperforms unoptimized baselines. Simulations across all scenarios show that the framework substantially reduces VR latency. Future work includes exploring more heuristics-based solvers, reinforcement-learning-based approaches for real-time mobility handling, and fully joint optimization without problem decomposition.

\bibliographystyle{Bibliography/IEEEtranTIE}
\bibliography{Bibliography/IEEEabrv,Bibliography/mybibfile.bib}\ 

\newpage

\section*{Appendix A: Proof of Proposition \ref{theorem1}}
\begin{proof}
Consider the following conditions. In the remote 3D transmission priority zone,  $T^{\langle 3 \rangle}
  \geq T^{\langle 4 \rangle}
 $, the following holds:  
for any FoV that does not prefetch 2D, if local rendering is used (path 3), its delay $T^{\langle 3 \rangle}
 $ is not smaller than the delay of directly using remote 3D transmission (path 4), which is $T^{\langle 4 \rangle}
 $.  Therefore, in the optimal solution, the situation where ``a FoV without 2D prefetching is assigned to local rendering'' will never occur. This means that at the optimal point, we can set $r \leq p^{2\mathrm{D}}$, which is equivalent to $q = \min \{p^{2\mathrm{D}}, r\} = r$.
Thus, all the FoVs rendered locally must have 2D FoV prefetched in memory, corresponding to path 2. Otherwise, the ``on-device rendering without 2D prefetched'' FoVs should be changed to ``remote 3D transmission'', which will not increase the total delay but will reduce it.

 Under the constraint $r \leq p^{2\mathrm{D}}$, the optimal solution is $r^* = p^{2\mathrm{D}^*}$. In the case where $q = r$, the average delay is:
\begin{equation}
\bar{T} = \frac{1}{V} \left[ r T^{\langle 2 \rangle}
  + \left(V - p^{3\mathrm{D}} - r \right) T^{\langle 4 \rangle}
  \right].
\end{equation}
The coefficient with respect to $r$ is:
$
\frac{\partial \bar{T}}{\partial r} = \frac{T^{\langle 2 \rangle}
  - T^{\langle 4 \rangle}
 }{V}.
$
In a typical VR setup (where rendering time is shorter than downloading 3D, i.e., $T^{\langle 2 \rangle}
  < T^{\langle 4 \rangle}
 $), $\bar{T}$ decreases strictly as $r$ increases.  
Since $r \leq p^{2\mathrm{D}}$, the optimal value of $r$ within the feasible zone must be $r^* = p^{2\mathrm{D}}$.  
Thus, we can rewrite $\bar{T}$ as a function of $p^{2\mathrm{D}}$ alone.

Substitute the memory constraint to obtain a linear objective function in terms of $p^{2\mathrm{D}}$. At this point, since $p^{3\mathrm{D}} = \frac{\delta - p^{2\mathrm{D}}}{\alpha}$, we have:
\begin{equation}
\begin{aligned}
\bar{T} &= \frac{1}{V} \left[ p^{2\mathrm{D}} T^{\langle 2 \rangle}
  + \left(V - \frac{\delta - p^{2\mathrm{D}}}{\alpha} - p^{2\mathrm{D}} \right) T^{\langle 4 \rangle}
  \right] \\
&= T^{\langle 4 \rangle}
  - \frac{T^{\langle 4 \rangle}
 }{V} \cdot \frac{\delta}{\alpha} + \frac{p^{2\mathrm{D}}}{V} \left(T^{\langle 2 \rangle}
  - T^{\langle 4 \rangle}
  + \frac{T^{\langle 4 \rangle}
 }{\alpha}\right).
\end{aligned}
\end{equation}
Note that $T^{\langle 4 \rangle}
  = \frac{\alpha Q^{2\mathrm{D}}}{R}$, hence
\begin{equation}
T^{\langle 4 \rangle}
  - \frac{T^{\langle 4 \rangle}
 }{\alpha} = T^{\langle 4 \rangle}
  \left(1 - \frac{1}{\alpha}\right) = \frac{(\alpha - 1) Q^{2\mathrm{D}}}{R}.
\end{equation}

Typically, in VR settings, we can assume
$
T^{\langle 2 \rangle}
  < T^{\langle 4 \rangle}
  \left(1 - \frac{1}{\alpha}\right),
$
i.e., on-device rendering with 2D prefetched provides more delay reduction than ``allocating part of the memory for 3D''. Thus, the coefficient in the parentheses is negative, and $\bar{T}$ decreases monotonically with respect to $p^{2\mathrm{D}}$. 
Therefore, under the constraints
$
p^{2\mathrm{D}} \leq \delta,  p^{2\mathrm{D}} \leq Q_{\mathrm{max}},
$
the optimal solution is
$
p^{2\mathrm{D}^*} = \min \left\{\delta, Q_{\mathrm{max}}\right\},  r^* = p^{2\mathrm{D}^*}.
$
Since $p^{3\mathrm{D}^*} = \frac{\delta - p^{2\mathrm{D}^*}}{\alpha}$, this naturally gives the amount of remaining memory used for 3D prefetching.  
Substituting into the expression for $\bar{T}$ yields the closed-form minimum average delay as stated in the proposition.

\end{proof}

\section*{Appendix B: Proof of Proposition \ref{theorem2}}
\begin{proof}
In the on-device 2D rendering priority zone, where $T^{\langle 3 \rangle}
  < T^{\langle 4 \rangle}
 $, local rendering is faster than remote 3D transmission even without 2D prefetched in its memory. Therefore, local rendering always helps reduce the delay, whether or not 2D FoV is prefetched. Thus, the optimal solution must have $r^* = Q_{\mathrm{max}}$, meaning that the full local memory budget is used. Moreover, if $p^{2\mathrm{D}} \leq r$: some local memory do not have 2D prefetched and follow path 3; if $p^{2\mathrm{D}} \geq r$: all local FoVs have 2D prefetched and follow path 2. For these two cases, we can express $\bar{T}$ separately. We can prove that: in the zone where $p^{2\mathrm{D}} \leq r$, $\bar{T}$ decreases monotonically with $p^{2\mathrm{D}}$ and with $r$. Hence, the optimal solution is $p^{2\mathrm{D}^*} = \min \{r, \delta\}$ and $r^* = Q_{\mathrm{max}}$; in the zone where $p^{2\mathrm{D}} \geq r$, the optimal point is at the boundary $p^{2\mathrm{D}^*} = r^*$, and $r^* = Q_{\mathrm{max}}$. Both zones achieve the same value at the boundary $p^{2\mathrm{D}} = r$. Thus, the global optimal solution is $r^* = Q_{\mathrm{max}}$,  $p^{2\mathrm{D}^*} = \min \left\{\delta, Q_{\mathrm{max}}\right\}$,  $p^{3\mathrm{D}^*} = \frac{\delta - p^{2\mathrm{D}^*}}{\alpha}$.

Substituting the above optimal strategy into:
\begin{equation}
\bar{T} = \frac{1}{V} \left[ q T^{\langle 2 \rangle}
  + (r - q) T^{\langle 3 \rangle}
  + \left( V - p^{3\mathrm{D}} - r \right) T^{\langle 4 \rangle}
  \right], 
\end{equation}
where $q = \min \left\{ p^{2\mathrm{D}}, r \right\}$. We can group and compute the delays for FoVs with 2D prefetched in local memory (path 2), FoVs without 2D downloaded from remove MECS (path 3), and remote 3D transmission (path 4). We get the following results:
\begin{enumerate}
\renewcommand{\labelenumi}{\roman{enumi}:}
\item  $p^{2\mathrm{D}^*}$ FoVs following path 2 with delay $T^{\langle 2 \rangle}
 $.
\item  $Q_{\mathrm{max}} - p^{2\mathrm{D}^*}$ FoVs following path 3 with delay $T^{\langle 3 \rangle}
 $.
\item  $V - p^{3\mathrm{D}*} - Q_{\mathrm{max}}$ FoVs following path 4 with delay $T^{\langle 4 \rangle}
 $.
\end{enumerate}

After expanding and simplifying, we obtain the closed-form expression for $\bar{T}^*$ as stated in the proposition.  
\end{proof}

\section*{Appendix C: Gradient Computation for  $\mathcal{P}2$--2}
Define the objective function as
\begin{equation}
J(\mathbf{ a},\boldsymbol P)
=B_g \sum_{l=1}^{L}\frac{1}{\omega_l}
\sum_{u=1}^{U}
\log_2\!\left(
1-B
\big|
\boldsymbol H_{l,u}\tilde{\boldsymbol M}_u\boldsymbol P_u
\big|^2
\right),
\end{equation}
where $B=\frac{1.5}{\ln(5\epsilon_{l,u})}
\frac{G^{\mathrm{Tx}}G^{\mathrm{Rx}}}
{G^{\mathrm{Tx}}G^{\mathrm{Rx}}I_u+B_g n_0}
$ is a constant for $(l,u)$ and $\tilde{\boldsymbol M}_u\triangleq \boldsymbol \Xi_u\boldsymbol M_u$.
Using Wirtinger calculus, the gradient of
$
\big|
\boldsymbol H_{l,u}\tilde{\boldsymbol M}_u\boldsymbol P_u
\big|^2
$
with respect to $\boldsymbol P_u^{\ast}$ is given by
$
\frac{\partial}{\partial \boldsymbol P_u^{\ast}}
\big|
\boldsymbol H_{l,u}\tilde{\boldsymbol M}_u\boldsymbol P_u
\big|^2
=
\tilde{\boldsymbol M}_u^{H}\boldsymbol H_{l,u}^{H}
\big(
\boldsymbol H_{l,u}\tilde{\boldsymbol M}_u\boldsymbol P_u
\big).
$
Applying the chain rule, the gradient of $J$ with respect to
$\boldsymbol P_u^{\ast}$ is obtained as
\begin{equation}
\nabla_{\boldsymbol P_u^{\ast}} J
=
-
\frac{B_g}{\ln 2}
\sum_{l=1}^{L}
\frac{\eta_{l,u}}{\omega_l}
\frac{
\tilde{\boldsymbol M}_u^{H}\boldsymbol H_{l,u}^{H}
\big(
\boldsymbol H_{l,u}\tilde{\boldsymbol M}_u\boldsymbol P_u
\big)
}{
1-B
\big|
\boldsymbol H_{l,u}\tilde{\boldsymbol M}_u\boldsymbol P_u
\big|^2
}.
\end{equation}

The holographic beamforming matrix can be written as
$
\boldsymbol M_u
=
\sqrt{\beta}\,
\mathrm{diag}(\boldsymbol m_u)\,
\boldsymbol A_u,
$
where $\boldsymbol m_u=[m_{u,1},\dots,m_{u,N}]^{T}$ and 
$
\boldsymbol{A}_u(n, k) \triangleq e^{-\alpha\left\|\boldsymbol{r}_n^k\right\|} e^{-j \boldsymbol{k}_s \boldsymbol{r}_n^k}, \quad \boldsymbol{A}_u \in \mathbb{C}^{N \times K}.
$
$\boldsymbol A_u$ is independent of $a_{l,u,k}$.
Since
$
m_{u,n}
=
\sum_{l'=1}^{L}\sum_{k'=1}^{K}
a_{l',u,k'}\,
\phi_{l',u,k'}(n),
$
the derivative of $\tilde{\boldsymbol M}_u$ with respect to $a_{l,u,k}$ is
$
\frac{\partial \tilde{\boldsymbol M}_u}{\partial a_{l,u,k}}
=
\boldsymbol \Xi_u\frac{\partial \boldsymbol M_u}{\partial a_{l,u,k}}
=
\sqrt{\beta}\,
\boldsymbol \Xi_u\,
\mathrm{diag}(\boldsymbol \phi_{l,u,k})\,
\boldsymbol A_u,
$
where
$
\boldsymbol \phi_{l,u,k}(n)
=
\big(\operatorname{Re}\{\Psi_{l,u}^{\mathrm{intf}}(\boldsymbol r_n^k)\}+1\big)/2
$.
Using the chain rule, the gradient of $J$ with respect to
$a_{l,u,k}$ is given by $\nabla_{ a_{l,u,k}} J=$
\begin{equation}
-
\frac{2B_g}{\ln 2}
\sum_{q=1}^{L}
\frac{\eta_{q,u}}{\omega_q}
\frac{
\operatorname{Re}\left\{
\big(
\boldsymbol H_{q,u}\tilde{\boldsymbol M}_u\boldsymbol P_u
\big)^{\!*}
\boldsymbol H_{q,u}
\left(
\frac{\partial \tilde{\boldsymbol M}_u}{\partial a_{l,u,k}}
\right)
\boldsymbol P_u
\right\}
}{
1-\eta_{q,u}
\big|
\boldsymbol H_{q,u}\tilde{\boldsymbol M}_u\boldsymbol P_u
\big|^2
}.
\end{equation}

%
%

\end{document}